\documentclass{article}
\usepackage{amsmath, amssymb, amsthm, bm, graphicx}
\usepackage{url}
\usepackage{subcaption}
\usepackage{booktabs}
\usepackage[margin=1in]{geometry}
\usepackage{natbib}
\usepackage{enumitem}
\usepackage{hyperref}
\usepackage{multirow}
\usepackage{xcolor}
\usepackage{algorithm}
\usepackage{algpseudocode}
\newtheorem{proposition}{Proposition}
\newtheorem{corollary}[proposition]{Corollary}
\newcommand{\mv}[1]{{\boldsymbol{\mathrm{#1}}}}
\newcommand{\E}{\mathbb{E}}

\title{Bivariate geostatistical latent variable models for the analysis of antibody density data}

\author{
Emanuele Giorgi$^{1,2}$\thanks{Corresponding author.
Mailing address: Department of Applied Health Science, University of Birmingham,
Edgbaston, Birmingham B15 2TT, UK.
E-mail: \href{mailto:e.giorgi@bham.ac.uk}{e.giorgi@bham.ac.uk}} \and
Jonas Wallin$^{3}$
}

\date{
$^{1}$Department of Applied Health Science, University of Birmingham, Birmingham, UK\\
$^{2}$Faculty of Health and Medicine, Lancaster University, Lancaster, UK\\
$^{3}$Department of Statistics, Lund University, Lund, Sweden\\
}

\begin{document}

\maketitle

\begin{abstract}
The increasing availability of serosurveys that measure antibody
responses to multiple antigens requires the development of methods
that can exploit the full information content of such data, both
biological and spatial. However, the non-Gaussian and potentially
multimodal distributional behaviour of antibody responses makes the
development of such methods inherently complex, especially in a
multivariate setting. Here, we extend the latent variable framework of \citet{giorgiwallin2026}, in which continuous antibody concentrations
are modelled through an individual-level latent seroreactivity process that represents
the level of immune activation to a given antigen. We focus primarily on the bivariate setting and set
out a series of guiding principles that justify the resulting joint
modelling structure. The proposed model captures distinct sources of correlation between
antibody responses, arising both from shared exposure to the same
environment and from biological processes occurring within the same
host. Spatial dependence is introduced through a novel
bivariate Mat\'ern random field, which we use to construct a
parsimonious class of cross-covariance functions between
antigen-specific spatial processes. We illustrate the application of
the framework to analyse data on bivariate antibody measurements from a malaria serosurvey in the Kenyan highlands. Results from the application and a simulation study show that ignoring this correlation substantially degrades inference on joint properties of the antibody distributions and on individual-level seroreactivity, but matters less when interest lies exclusively in each antibody's marginal distribution. Finally, we discuss how the framework could be extended to settings with more than two antigens, and highlight the modelling challenges that arise as the number of antigens grows.
\end{abstract}

\textbf{Keywords:} antibody; bivariate; cross-correlation; finite mixture models; geostatistics; latent variable models;  Mat\'ern field; serology; SPDE; spatial prediction.

\section{Introduction}
\label{sec:intro}
Population-based serological surveys enable quantification of
past exposure and immunity at the individual and population level,
complementing information that clinical case reporting alone
cannot provide \citep{metcalf2016, corran2007}.
The usefulness of such surveys spans a wide range of epidemiological contexts, from
detecting subtle signs of resurgence in low-transmission or
post-elimination settings, to quantifying the accumulation of
immunity across age groups in high-transmission areas, to
characterising infection burden in diseases with predominantly
asymptomatic transmission where case counts systematically
underestimate true exposure \citep{Drakeley2005, arnold2018}.
Advances in serological platforms, from single-antigen
enzyme-linked immunosorbent assays to high-throughput multiplex
bead-based assays, have made it increasingly feasible to measure
antibody responses against multiple antigens simultaneously in
large cross-sectional surveys \citep{Hay2024Serodynamics}.
This has enabled the development of integrated
serosurveillance strategies that can simultaneously monitor exposure
to multiple antigens within a single survey, providing a more
efficient and comprehensive understanding of population immunity than antigen-by-antigen analyses \citep{arnold2018, Carcelen2025}.
Realising this potential, however, requires statistical methods that can jointly analyse multivariate antibody outcomes while accounting for the dependence structure of such data. This dependence operates at two distinct levels. Within an individual, responses to different antigens are correlated because a single exposure event stimulates several simultaneous responses and because shared immunological mechanisms govern their subsequent dynamics. Between individuals, correlation arises from spatial proximity, since individuals living close to one another share the environmental and ecological determinants of exposure. 

The dominant paradigm in serological data analysis reduces quantitative antibody measurements to a binary seropositive or seronegative classification \citep{corran2007,arnold2018}, reflecting a broader
tendency in clinical research to dichotomize continuous data
\citep{royston2006}.  This approach has thus shaped how different sources of dependence have been addressed. Within the model-based geostatistics framework \citep{DiggleGiorgi2016},
generalized linear mixed models embedding a Gaussian spatial random field
have been used to map seroprevalence and delineate foci of transmission
for a range of pathogens, including malaria \citep{Stresman2017},
trachoma \citep{Sasanami2023}, and lymphatic filariasis
\citep{CadavidRestrepo2023}. However, no geostatistical framework has yet been developed to model
the cross-correlation in space across multiple antibody measurements. In a non-spatial context, some studies have proposed methods to handle multivariate antibody outcomes.
\citet{Lubyayi2021} developed a pairwise joint linear mixed model for multiple continuous immunological outcomes measured longitudinally, and \citet{ODriscoll2025} proposed a semi-mechanistic multivariate mixture model fitted to continuous log-transformed antibody titres with the specific aim of jointly inferring pathogen-specific infection prevalence and between-pathogen antibody cross-reactivity in settings where
antigenically related pathogens co-circulate. However, both rely on Gaussian, or log-Gaussian, distributional assumptions which may not adequately capture the skewness and
multimodality that often characterise antibody concentration data, and neither incorporates spatial dependence. To the best of our knowledge, no existing method addresses cross-antigen correlation, age-dependent variation, and spatial dependence among individuals within a single modelling framework that also allows for non-Gaussian, multimodal distributions of the outcome data. This is the gap we address in this paper.

To achieve this, we build on the latent variable framework of
\citet{giorgiwallin2026}, which introduces a latent seroreactivity
variable that captures the continuum of immunological response
across individuals and accounts for the non-Gaussian behaviour
typical of these data. Here, we extend this framework in two
directions. First, we develop a bivariate latent variable
model in which cross-antigen dependence is represented through the
joint distribution of antigen-specific seroreactivity states.
This dependence reflects biological processes operating at
the individual level, whereby a single exposure event
simultaneously stimulates immune responses to multiple
antigens and shared immunological mechanisms within the
host inducing correlation in seroreactivity that is
intrinsic to the immune response itself.
Second, we introduce spatial correlation through a bivariate
Mat\'{e}rn random field, to account for residual correlation
among individuals who share similar exposure environments as a result of spatial proximity. We argue that explicitly accounting for both sources of correlation, host-induced and spatial, through the proposed joint modelling approach is a crucial first step towards a statistical framework that makes full use of serological data, and provides a foundation for more efficient integrated serosurveillance strategies that exploit the rich dependence structure present in multi-antibody data. 

The remainder of the paper is organised as follows.
Section~\ref{sec:framework} recalls the univariate latent variable
model of \citet{giorgiwallin2026} and sets out the principles
guiding its multivariate extensions. Section~\ref{sec:model}
introduces the bivariate latent variable model in a non-spatial
context. Section~\ref{sec:spatial} develops the geostatistical extension,
introducing a bivariate Mat\'{e}rn random field, its SPDE representation for computationally feasible
inference, and epidemiologically relevant population-level predictive targets.
Section~\ref{sec:application} presents an application on the joint analysis of two antibody measurements from a cross-sectional malaria
survey conducted in the Kenyan highlands. Section~\ref{sec:simulation}
presents a simulation study assessing the consequences of ignoring
either spatial or cross-antigen dependence.
Section~\ref{sec:discussion} concludes with a discussion of
limitations and extensions.

Code implementing the analysis pipeline, illustrated on a simulated dataset with the same structure as the data of Section~\ref{sec:application}, is available at
\url{github.com/giorgistat/mlatv-paper}.

\section{Guiding principles for developing a multivariate latent seroreactivity modelling framework}
\label{sec:framework}

\citet{giorgiwallin2026} introduced a latent variable framework
in which each individual's immune activation state is represented
by a latent variable $T \in [0,1]$, referred to as seroreactivity,
with values near zero corresponding to low-activation states
characterised by minimal or absent serological activity and values
near one indicating strong or boosted antibody responses. The
observed antibody concentration is modelled conditionally on $T$ as
\begin{equation}
\label{eq:univ}
  Y \mid T = t \;\sim\;
  \mathcal{N}\!\left(
    (1-t)\mu_0 + t\mu_1,\;
    (1-t)\sigma_0^2 + t\sigma_1^2
  \right),
\end{equation}
so that both the mean and variance interpolate linearly between
boundary values $(\mu_0, \sigma_0^2)$ and $(\mu_1, \sigma_1^2)$
corresponding to the lower and upper extremes of the assay range.
The distribution of $T$ on $(0,1)$ can be specified
as a single parametric distribution, with the Beta family as a
natural choice given its ability to represent shapes ranging from
U-shaped to unimodal. Alternatively, the distribution of $T$ can be
specified as a finite mixture of component densities on $(0,1)$,
each representing a distinct immunological subpopulation with
characteristic seroreactivity levels. In this formulation,
the mixing probabilities vary with age and can be linked
either to mechanistic models of antibody acquisition, such
as catalytic models \citep{Drakeley2005} that describe the rate at
which individuals develop measurable immune responses, or to
flexible regression-based specifications that capture empirical
age-antibody patterns without imposing strong biological
assumptions.

Extending \eqref{eq:univ} to several antigens requires deciding
where in the model hierarchy, each source of dependence should be
introduced. Without constraints on this choice, the resulting
specifications risk being weakly identifiable and difficult to
reconcile with the underlying biology. We therefore set out three
principles that govern the model formulation, stated here in general
terms and applied in Section~\ref{sec:model} to the non-spatial
bivariate model and in Section~\ref{sec:spatial} to its
geostatistical extension.

\begin{itemize}
    \item \textbf{P1.} \emph{Dependence is carried by the latent
    process, not by the observation model.} The parameters of the
    observation model -- i.e. the distribution of $Y \mid T$ -- are taken to reflect fixed characteristics of
    the assay, so that, conditional on the latent seroreactivity
    levels, residual variation in the observed antibody
    concentrations arises from measurement noise and laboratory
    artefacts that do not induce systematic correlation across
    antigens.

    \item \textbf{P2.} \emph{Exposure and response magnitude are modelled separately.} The force of infection to which an
    individual is exposed affects the likelihood of mounting a
    high-seroreactivity response, whilst individual-specific
    immunological processes govern the magnitude of the response
    conditional on exposure.

    \item \textbf{P3.} \emph{Sources of correlation that operate at
different scales should be represented separately.} Dependence
arises from at least three mechanisms acting at distinct levels: a
single infection event stimulating several antigen-specific
responses within an individual, a shared exposure environment
correlating infection risk across individuals living in proximity,
and shared within-host processes correlating response magnitude
independently of any single event. Whenever possible, each mechanism should be captured by its own
correlation structure, so that the underlying processes remain
mutually distinguishable.
\end{itemize}

Taken together, these principles constrain the construction of the
multivariate framework. Following \textbf{P2}, a mixture
formulation for the seroreactivity process is the more natural
choice, since exposure, including that driven by spatial risk
factors, can be accounted for through the mixing probabilities
governing the level of the seroreactivity response, whilst
response magnitude is expressed through the means of the mixture
components. \textbf{P3} in turn points to a hierarchical model
structure in which each mechanism of correlation is represented by
a separate component and therefore retains a clear interpretation.
Of these mechanisms, the shared infection event is the most
pertinent when the antigens derive from the same pathogen and
life-cycle stage, as is the case for the two antigens analysed in
Section~\ref{sec:application}. 

In the remainder of the paper we develop this construction for the
bivariate case, which is sufficient to expose the modelling and
computational challenges that also arise in a general multivariate
treatment. We return in Section~\ref{sec:discussion} to how the
framework extends beyond two antigens.

\section{A bivariate latent variable model in a non-spatial setting}
\label{sec:model}

We now apply the principles of Section~\ref{sec:framework} to develop a modelling framework for
the case of two antigens, in a non-spatial setting. Following \textbf{P1}, we assume that the observed antibody
concentrations are conditionally independent across antigens given
the latent seroreactivity process, so that cross-antigen
dependence is introduced through its joint distribution.

Let $\mv{Y}_i := (Y_{i,1}, Y_{i,2})^\top$ denote the random vector of log-antibody concentrations for individual $i$, and let $\mv{y}_i := (y_{i,1}, y_{i,2})^\top$ be its realised value. For antigen $k \in \{1,2\}$, let $T_{i,k}\in(0,1)$ denote the corresponding latent seroreactivity, and write $\mv{T}_i := (T_{i,1}, T_{i,2})^\top$. Conditional on $\mv{T}_i=\mv{t}$, we define
\begin{align}
\label{eq:obs_mean_var}
\mu_k(t_{i,k}) &:= \mu_{k,0}+t_{i,k}(\mu_{k,1}-\mu_{k,0}), \nonumber\\
\sigma_k^2(t_{i,k}) &:= \sigma_{k,0}^2+t_{i,k}(\sigma_{k,1}^2-\sigma_{k,0}^2),
\end{align}
and assume the conditional joint density
\begin{equation}
\label{eq:obs_joint}
p(\mv{y}_i \mid \mv{T}_i=\mv{t})
=
\prod_{k=1}^{2}
\mathcal{N}\!\left\{
  y_{i,k};\;\mu_k(t_{i,k}),\;\sigma_k^2(t_{i,k})
\right\}.
\end{equation}
Thus, both the conditional mean and the conditional variance interpolate linearly between the lower and upper extremes of the antigen-specific latent scale. As in \citet{giorgiwallin2026}, the parameters $\mu_{k,0}$ and $\mu_{k,1}$ describe the expected antibody concentration at the lower and upper boundaries of detectable seroreactivity for antigen $k$. The constraint $\mu_{k,1}>\mu_{k,0}$ fixes the orientation of the latent scale, so that larger values of $T_{i,k}$ correspond to larger expected antibody responses. The parameters $\sigma_{k,0}^2$ and $\sigma_{k,1}^2$ play the analogous role for the conditional measurement variance.

Under \eqref{eq:obs_joint}, cross-antigen dependence is therefore
represented entirely by the joint density of $\mv{T}_i$, which we
denote by $g(\mv{t}\mid\mv{d}_i)$. The vector $\mv{d}_i$ may
include an intercept, age transformations, and other covariates,
but we write the observed age separately as $a_i$ whenever the age
dependence of a particular parameter is highlighted.

In a bivariate context, the unit square $(0,1)^2$ is the latent space for the joint immune profile. Points near the origin correspond to low seroreactivity for both antigens, points near $(1,1)$ correspond to strong responses to both antigens, and mixed configurations represent individuals with high seroreactivity to one antigen but low seroreactivity to the other. To model the bivariate seroreactivity process, we use a finite mixture of truncated Gaussian components on the unit square. We argue that finite mixtures of truncated bivariate Gaussians provide sufficient flexibility to capture complex patterns in the data, while remaining computationally more tractable than alternative specifications such as bivariate Beta-type distributions.

Let $\mv{Z}_i := (Z_{i,1},Z_{i,2})^\top\in\{0,1\}^2$ denote the component indicators, where $0$ denotes the lower seroreactivity component and $1$ denotes the upper seroreactivity component for the corresponding antigen. Conditional on $\mv{Z}_i=\mv{z}$, with $\mv{z} := (z_1,z_2)^\top$, and on $\mv{d}_i$, we model the latent seroreactivity pair by
\begin{equation}
\label{eq:biv_latentT_component}
\mv{T}_i \mid \mv{Z}_i=\mv{z},\,\mv{d}_i
\;\sim\;
\mathcal{N}_{(0,1)^2}\!\left\{
m(\mv{d}_i;\, \mv{z}_i),\,\Sigma_T
\right\}.
\end{equation}
Here, $\mathcal{N}_{(0,1)^2}\{\mv{m},\Sigma\}$ denotes a Gaussian distribution with location parameter $\mv{m}$ and covariance matrix $\Sigma$, restricted to $(0,1)^2$ and renormalised. The vector $m(\mv{d}_i;\, \mv{z}_i)$ is therefore a pre-truncation location parameter for the component, not necessarily the exact mean of the truncated distribution. We define
\begin{equation*}
m(\mv{d}_i;\, \mv{z}_i)
:=
\left\{m_1(\mv{d}_i;\, z_{i,1}),\,m_2(\mv{d}_i;\, z_{i,2})\right\}^\top,
\end{equation*}
where, for each antigen $k\in\{1,2\}$, the lower and upper latent component locations are
\begin{align}
\label{eq:biv_latentT_location_links}
m_k(\mv{d}_i;\, 0)
&:= -2 + 4\,h(\mv{d}_i^\top\alpha_{k,0}), \nonumber\\
m_k(\mv{d}_i;\, 1)
&:= m_k(\mv{d}_i;\, 0)
 + \{2-m_k(\mv{d}_i;\, 0)\}\,h(\mv{d}_i^\top\alpha_{k,1}),
\end{align}
with $h(u) := \bigl(1+\exp(-u)\bigr)^{-1}$, so that $-2<m_k(\mv{d}_i;\,0)<m_k(\mv{d}_i;\,1)<2$. The bounded range $(-2,2)$ is used to avoid weak identifiability in regions where the location parameter is far from the domain of $\mv{t}$, since in this regime the corresponding distribution is close to its limiting exponential form.

The observed bivariate antibody density is obtained by
integrating out the latent seroreactivity pair,
\begin{equation*}
p(\mv{y}_i\mid\mv{d}_i)
=
\int_{(0,1)^2}
p(\mv{y}_i\mid\mv{T}_i=\mv{t})\,
g(\mv{t}\mid\mv{d}_i)
\,d\mv{t}.
\end{equation*}

This integral is approximated by partitioning $(0,1)^2$
into $M^2$ rectangular cells $B_{r,s}$, $r,s=1,\ldots,M$,
with midpoints $\mv{t}_{r,s}$. The implementation uses a
renormalised midpoint rule rather than evaluating bivariate
Gaussian probabilities for every cell. In particular, the
discrete mass assigned to cell $B_{r,s}$ under component
$\mv{z}$ is
\begin{equation*}
\widetilde w_{i,r,s}(\mv{z})
:=
\frac{|B_{r,s}|\,
\phi_2\!\left\{\mv{t}_{r,s};
m(\mv{d}_i;\mv{z}),\Sigma_T\right\}}
{\displaystyle
\sum_{r'=1}^{M}\sum_{s'=1}^{M}
|B_{r',s'}|\,
\phi_2\!\left\{\mv{t}_{r',s'};
m(\mv{d}_i;\mv{z}),\Sigma_T\right\}},
\end{equation*}
where $\phi_2(\cdot;\mv{m},\Sigma_T)$ denotes the
untruncated bivariate Gaussian density. Renormalisation over
the grid supplies the truncation constant numerically. The
marginal density is then approximated by
\begin{equation}
\label{eq:biv_latentT_quadrature}
p(\mv{y}_i\mid\mv{d}_i)
\approx
\sum_{\mv{z}\in\{0,1\}^2}
\pi_i(\mv{z})
\sum_{r=1}^{M}\sum_{s=1}^{M}
\widetilde w_{i,r,s}(\mv{z})\,
p(\mv{y}_i\mid\mv{T}_i=\mv{t}_{r,s}),
\end{equation}
where $\widetilde w_{i,r,s}(\mv{z})$ approximates the
corresponding truncated-Gaussian cell probability. The full
log-likelihood is
\begin{equation*}
\ell(\vartheta)
:=
\sum_{i=1}^n \log p(\mv{y}_i\mid\mv{d}_i;\vartheta),
\end{equation*}
where $\vartheta$ collects all model parameters.

The covariance matrix of the truncated Gaussian components is written as
\begin{equation}
\label{eq:biv_latentT_covariance}
\Sigma_T := D_T R_T D_T,
\qquad
D_T := \operatorname{diag}\{\zeta_1,\zeta_2\},
\qquad
R_T :=
\begin{pmatrix}
1 & \rho_T\\
\rho_T & 1
\end{pmatrix}.
\end{equation}
The scale parameters $\zeta_1$ and $\zeta_2$ are standard deviations that control the within-component spread of the latent seroreactivity coordinates. The parameter $\rho_T\in(-1,1)$ denotes the correlation of the bivariate Gaussian distribution prior to truncation, quantifying the residual continuous dependence that persists after conditioning on the component labels. Hence, it should not be interpreted as the Pearson correlation of the truncated component itself, since the latter also depends on $m(\mv{d}_i;\mv{z}_i)$, $\zeta_1$, and $\zeta_2$. The resulting latent density is
\begin{equation}
\label{eq:biv_latentT_density}
g(\mv{t}\mid\mv{d}_i)
:=
\sum_{\mv{z}\in\{0,1\}^2}
\pi_i(\mv{z})\,
\mathcal{N}_{(0,1)^2}\!\left\{
\mv{t};\,m(\mv{d}_i;\, \mv{z}_i),\,\Sigma_T
\right\},
\qquad \mv{t}\in(0,1)^2,
\end{equation}
where $\pi_i(\mv{z}) := \Pr(\mv{Z}_i=\mv{z}\mid\mv{d}_i)$ is the joint probability of component $\mv{z}$ for individual $i$.

We define the probabilities $\pi_i(\mv{z})$ through a sequential logistic parametrisation, in which the first antigen has a marginal component probability and the second antigen has a conditional component probability given the first. The ordering is a parametrisation of the joint distribution and should not be interpreted as a causal ordering between the two antigens. Specifically, let
\begin{align}
\label{eq:mixing_probs}
p_{i,1}
&:= \Pr(Z_{i,1}=1\mid\mv{d}_i)
 = h(\mv{d}_i^\top\gamma_1), \nonumber\\
p_{i,2}(z_1)
&:= \Pr(Z_{i,2}=1\mid Z_{i,1}=z_1,\mv{d}_i)
 = h(\mv{d}_i^\top\gamma_2+\delta(a_i)z_1).
\end{align}
The joint mixture probability is then
\begin{equation*}
\pi_i(z_1,z_2)
=
 p_{i,1}^{z_1}(1-p_{i,1})^{1-z_1}
 p_{i,2}(z_1)^{z_2}\{1-p_{i,2}(z_1)\}^{1-z_2}.
\end{equation*}
The age-dependent parameter $\delta(a_i)$ controls the association between the two component labels.  Positive values imply that high seroreactivity for the first antigen increases the probability of high seroreactivity for the second antigen, after accounting for $\mv{d}_i$, and viceversa.

The interpretation of $\delta(a_i)$ depends on the antigens being
modelled. When the two antigens belong to the same pathogen, a
single exposure event may stimulate responses to both, so a
positive association between their component labels is natural and
can be enforced by setting
$\delta(a_i) := \exp(\delta_0+\delta_1\log a_i)>0$. We allow this
association to vary with age because antigens differ in
immunogenicity and in the durability of the antibody responses
they elicit, so the two responses need not reach the
high-seroreactivity regime at the same rate. At young ages, when
cumulative exposure is low, an individual may therefore occupy the
high regime for one antigen and the low regime for the other. As
exposure accumulates this mismatch resolves and the two regimes
become more concordant, at a rate captured by $\delta_1$ over and
above the age dependence already carried by $\gamma_1$ and
$\gamma_2$. When the antigens belong to different pathogens, the
direction of association may be unknown and may reflect
co-infection, differential exposure, cross-reactivity, or
immune-mediated interactions, so an unconstrained specification
such as $\delta(a_i) := \delta_0+\delta_1\log a_i$ is preferable,
allowing the data to determine both the sign and the magnitude of
the association. The model therefore separates two of the three
mechanisms identified in \textbf{P3}, with $\delta(a_i)$ encoding
the shared-infection-event pathway and $\rho_T$ encoding the
shared within-host immunological mechanisms, describing the
continuous co-variation in latent seroreactivity that remains once
the regime labels have been fixed. The third mechanism, which corresponds to the shared
exposure environment, is addressed in Section~\ref{sec:spatial}.

Figure~\ref{fig:biv-latent-example} shows an example for a parameter configuration that reflects two positively associated antigens, as in the case of the malaria application of Section~\ref{sec:application}. For the purpose of illustration, we have set all the parameters that express the effect of age to zero. The mixing probability for the first antigen is $p_1 = 0.41$, so roughly equal numbers of individuals are assigned to the low and high seroreactivity components for antigen 1. The cross-antigen association parameter $\delta = 3.55$ induces a strong positive dependence between the two antigen-specific regimes: among individuals in the low seroreactivity component for antigen 1, only $7\%$ are in the high seroreactivity component for antigen 2, whereas among individuals in the high seroreactivity component for antigen 1, this proportion rises to $72\%$. As a result, most probability mass is concentrated in the lower--lower and upper--upper corners of the unit square, with joint probabilities $\pi_i(0,0) \approx 0.55$ and $\pi_i(1,1) \approx 0.30$, respectively, while the off-diagonal regimes have joint probabilities $\pi_i(1,0) \approx 0.11$ and $\pi_i(0,1) \approx 0.04$. The within-component correlation $\rho_T = 0.82$ further reinforces the tendency for the two latent seroreactivity values to move together inside each component. The right panel shows the bivariate distribution of the induced antibody concentrations. We observe that, although most of the probability mass of the joint bivariate distribution lies along the diagonal corresponding to the identity line, a non-negligible amount of probability mass is also concentrated in the lower-right corner of the panel, and a substantially smaller concentration in the upper-left corner.

We also point out an alternative approach for specifying the mixing probabilities
$\pi_i(z_1,z_2)$ of the component indicators $\mv{Z}_i$ that allows to model marginal probabilities directly.  This is based on the use of a \citet{plackett1965} copula governed by a
covariate-dependent odds ratio $\theta(a_i)$, taking the same functional
form as $\delta(a_i)$ in \eqref{eq:mixing_probs}. Writing
$p_{i,k}:=\Pr(Z_{i,k}=1\mid\mv{d}_i)$, the joint probability $\pi_i(1,1)$
is the solution of
\begin{equation*}
\theta(a_i) = \frac{\pi_i(1,1)\,\pi_i(0,0)}{\pi_i(1,0)\,\pi_i(0,1)},
\qquad
\pi_i(1,1)+\pi_i(1,0)=p_{i,1}, \qquad \pi_i(1,1)+\pi_i(0,1)=p_{i,2},
\end{equation*}
which admits a closed form for $\theta(a_i)\neq1$ \citep{plackett1965},
with the remaining three cell probabilities obtained by subtraction from
the margins $p_{i,1}$ and $p_{i,2}$. This is a reparametrisation  of \eqref{eq:mixing_probs}, since the sequential
specification already implies a constant conditional odds ratio
$\exp\{\delta(a_i)\}$ between $Z_{i,1}$ and $Z_{i,2}$, and both
parametrisations describe the same three-dimensional family of joint
distributions with the given margins. The Plackett parametrisation has the advantage that
$\gamma_1$ and $\gamma_2$ each retain a direct marginal interpretation as
the log-odds of high seroreactivity for the corresponding antigen, whereas
under the sequential specification only $\gamma_1$ retains this
interpretation, since the marginal probability implied for the second
antigen is a nonlinear function of $\gamma_1$, $\gamma_2$, and
$\delta(a_i)$. This distinction becomes material beyond the bivariate
case, where multivariate Plackett-type constructions do not admit a
comparably simple closed form, whereas the sequential formulation extends
directly by introducing successive conditional terms $\delta_{jk}(a_i)$.

\begin{figure}[ht!]
\centering
\begin{subfigure}{0.48\textwidth}
\centering
\includegraphics[width=\textwidth]{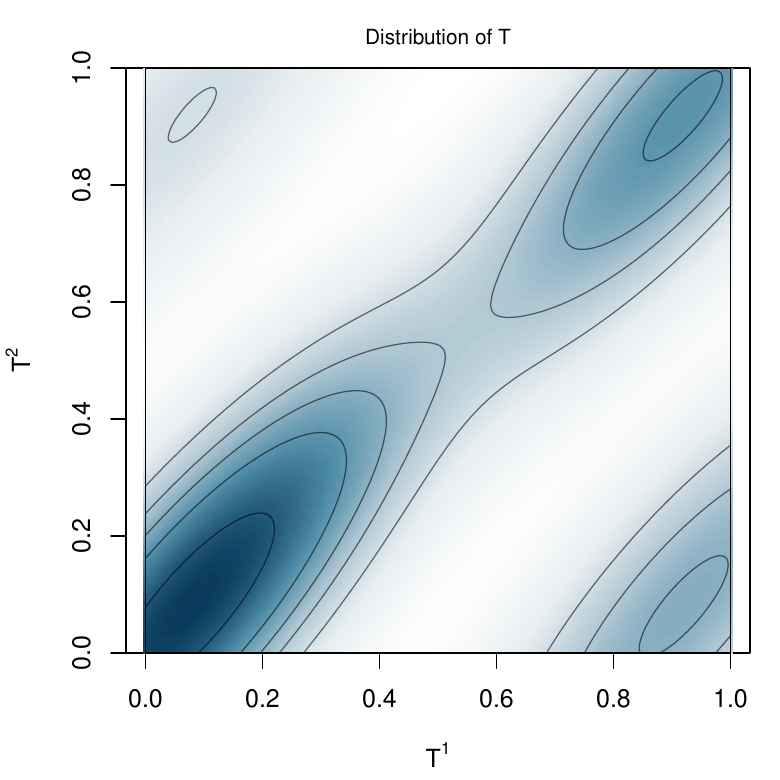}
\caption{}
\end{subfigure}
\hfill
\begin{subfigure}{0.48\textwidth}
\centering
\includegraphics[width=\textwidth]{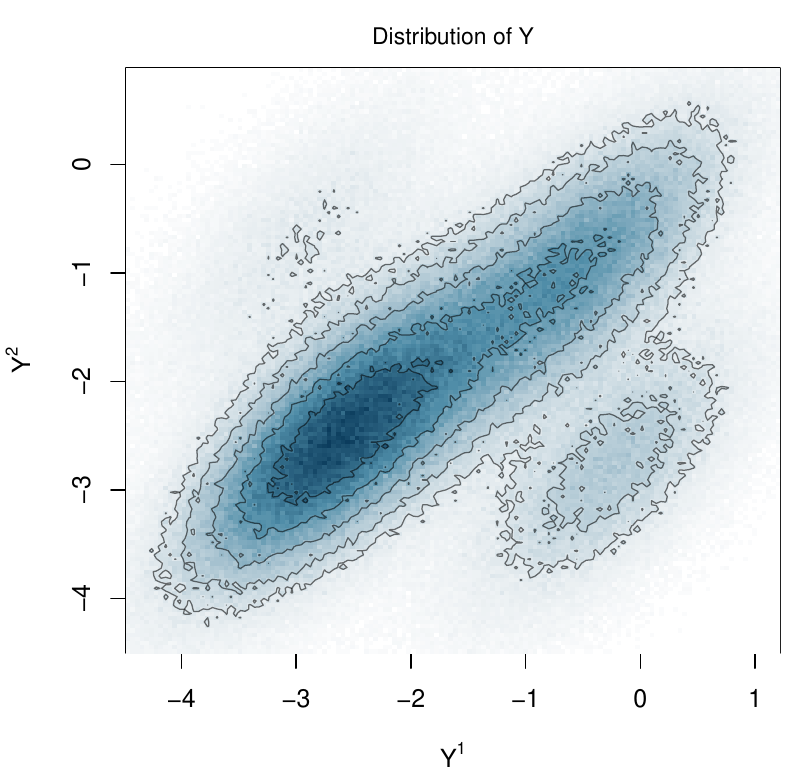}
\caption{}
\end{subfigure}
\caption{Illustration of the bivariate latent seroreactivity model. Panel (a) shows the four-component truncated Gaussian mixture on the latent unit square, for $p_1 = 0.41$, cross-antigen association parameter $\delta = 3.55$, and within-component correlation $\rho_T = 0.82$. Panel (b) shows the bivariate distribution of the observed antibody concentrations induced by this latent sero-reactivity structure.}
\label{fig:biv-latent-example}
\end{figure}

\section{Geostatistical extension}
\label{sec:spatial}

We embed the bivariate latent variable model within a geostatistical
framework by allowing the mixing probabilities $\pi_i(\mv{z})$ to
depend on spatially structured random effects. This follows directly
from \textbf{P1} and \textbf{P2} of Section~\ref{sec:framework}.
By \textbf{P1}, the parameters of the observation
model~\eqref{eq:obs_joint} are held constant across locations, since
they describe assay characteristics rather than exposure. By
\textbf{P2}, spatial variation represents variation in the force of
infection and therefore enters the mixing probabilities governing
regime membership, leaving the magnitude of the response conditional
on regime unchanged. The mixture specification for $\mv{T}_i$
adopted in Section~\ref{sec:model} is what makes this separation
available. Specifically, we extend the sequential logistic
parametrisation in~\eqref{eq:mixing_probs} by introducing a
bivariate Gaussian random field
$\mv{S}(\mv{x}) := \{S_1(\mv{x}), S_2(\mv{x})\}^\top$, so that
\begin{align}
\label{eq:spatial_mixing_probs}
p_{i,1}
&= h(\mv{d}_i^\top \gamma_1 + S_1(\mv{x}_i)), \nonumber \\
p_{i,2}(z_1)
&= h(\mv{d}_i^\top \gamma_2 + \delta(a_i)\, z_1 +
S_2(\mv{x}_i)).
\end{align}
The random field $S_1$ captures residual spatial variation in the
marginal probability of high seroreactivity for the first antigen, while
$S_2$ captures residual spatial variation in the probability of high
seroreactivity for the second antigen conditional on the regime of the
first. The three mechanisms identified in \textbf{P3} are now each
represented by a separate component of the model. The term
$\delta(a_i)\,z_1$ retains its role from Section~\ref{sec:model} of
encoding the shared-infection-event pathway; the fields $S_1$ and
$S_2$, together with their cross-correlation, encode the shared
exposure environment; and the within-component correlation $\rho_T$
in~\eqref{eq:biv_latentT_component} encodes the shared within-host
immunological mechanisms.

\subsection{Bivariate Mat\'{e}rn random fields}
\label{sec:biv-matern}

\begin{figure}[ht!]
    \centering
    \includegraphics[width=\textwidth]{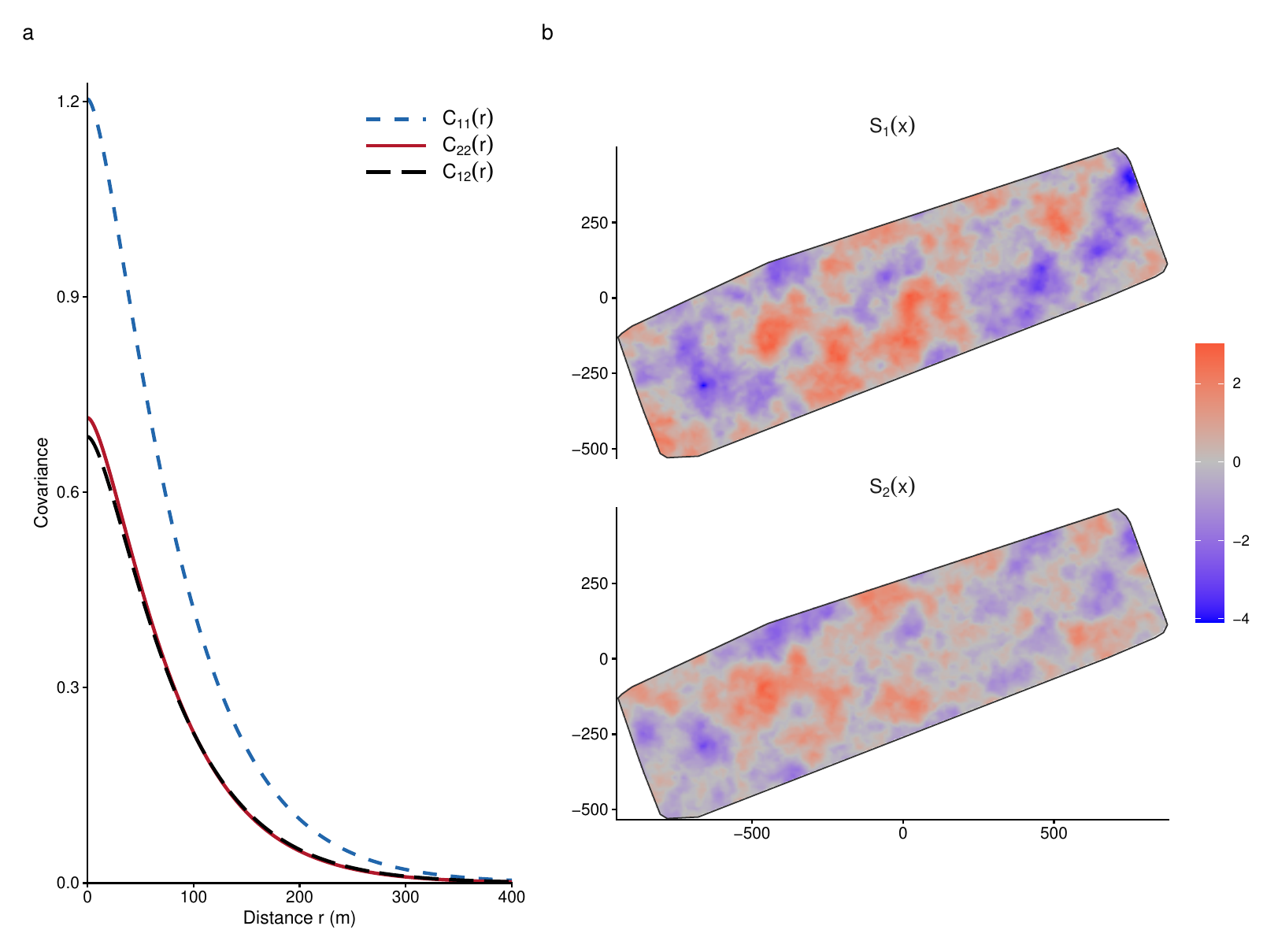}
    \caption{(a) Example of covariance and cross-covariance functions of a bivariate field with $\sigma_1 = 1.1$, $\sigma_2 = 0.9$, $k_1^{-1} = 164.1$ and $k_2^{-1} = 155.39$. The smoothness parameter is fixed at the same value for the two fields at $\nu=1$ (see \eqref{eq:biv-matern-unit-smoothness}). (b) Simulated fields based on the covariance function illustrated in panel (a). The locations of the simulation are defined over a prediction grid within the convex hull of the locations of the application in Section~\ref{sec:application}.}
    \label{fig:crosscov-sim-combined}
\end{figure}

To represent both the marginal spatial variation and the dependence
between the two latent fields, let
$\mv{S}(\mv{x}) := \{S_1(\mv{x}),S_2(\mv{x})\}^\top$ be a mean-zero
stationary Gaussian random field. Its second-order structure is
characterised by the semi positive definite matrix-valued covariance function
\begin{equation*}
\mv{C}(\mathbf{h})
:=
\{C_{ij}(\mathbf{h})\}_{i,j=1}^2,
\qquad
C_{ij}(\mathbf{h})
:=
\mathrm{Cov}\{S_i(\mv{x}),S_j(\mv{x}+\mathbf{h})\}.
\end{equation*}
Stationarity ensures that this covariance depends on the locations
only through the lag $\mathbf{h}$, while isotropy implies that each
entry depends on $\mathbf{h}$ only through the separation
$r := \|\mathbf{h}\|$; see \citet{kleiber2017coherence}
for the matrix-valued covariance formulation, and more details. The diagonal entries
describe the marginal spatial dependence, whereas the off-diagonal
entries describe dependence between the two fields.

We specify the marginal covariance of component $k$ by a Mat\'{e}rn
model with smoothness $\nu_{k}>0$, inverse range $\kappa_k>0$,
and marginal standard deviation $\sigma_k>0$. Although $\kappa_k$
is the natural scale parameter in the covariance and SPDE
representations, spatial dependence is more readily interpreted
through the practical range
$\omega_{k} := \sqrt{8\nu_{k}}/\kappa_k$. We therefore use
$\omega_{k}$ when reporting the spatial scale of component $k$.

The marginal covariance between two locations $\mv{x}$ and
$\mv{x}'$ for field $k$ is
\begin{equation}
\label{eq:matern_cov}
C_{kk}(r)
=
\sigma_k^2\,\mathcal{M}(\kappa_k r;\nu_{k}),
\end{equation}
where
$
\mathcal{M}(z;\nu)
:=
\frac{2^{1-\nu}}{\Gamma(\nu)}z^\nu K_\nu(z)$, with $K_\nu$
denoting the modified Bessel function of the second kind. The
smoothness parameter controls the local regularity of the field,
whereas $\kappa_k$ controls the rate at which spatial dependence
decays with distance.

The triangular multivariate Mat\'{e}rn construction of
\citet{bolin2020multivariate} gives a cross-covariance whose
shape is determined by the two marginal covariance models. Let
$q_k := (\nu_{k}+1)/2$,
$\bar\nu := (\nu_1+\nu_2)/2$, and
$\kappa_u^2 := u\kappa_1^2+(1-u)\kappa_2^2$ for $0<u<1$. The
cross-covariance is
\begin{equation}
\label{eq:biv-matern-crosscov}
C_{12}(r)
=
\rho_S\,\sigma_1\sigma_2\,
\frac{\sqrt{\nu_1\nu_2}\,\Gamma(\bar\nu)\,\kappa_1^{\nu_1}\kappa_2^{\nu_2}}
     {\Gamma(q_1)\Gamma(q_2)}
\int_0^1
u^{q_1-1}(1-u)^{q_2-1}
\kappa_u^{-2\bar\nu}
\mathcal{M}(\kappa_u r;\bar\nu)\,du .
\end{equation}
where $\rho_S\in(-1,1)$ is the cross-field correlation
parameter.

The same covariance family can be represented through the Whittle
SPDE construction. In the univariate case,
\citet{whittle1963stochastic} showed that a stationary
Mat\'{e}rn field can be represented as the solution of
\begin{equation}
\label{eq:univ-spde-whittle}
\tau(\kappa^2-\Delta)^{\alpha/2}S(\mv{x})
=
\mathcal{W}(\mv{x}),
\end{equation}
where $\Delta$ is the Laplacian, $\mathcal{W}$ is Gaussian
white noise, $\alpha := \nu+d/2$, and $\tau$ fixes the marginal
variance. In two dimensions, the variance normalisation is
$\tau^2 := (4\pi\nu\,\sigma^2\kappa^{2\nu})^{-1}$. For the
bivariate model above, $d=2$ gives $\alpha_k=\nu_{k}+1$;
equivalently, component $k$ uses the operator
$\mathcal{L}_k := (\kappa_k^2-\Delta)^{(\nu_{k}+1)/2}$.

The bivariate random field with cross-covariance
\eqref{eq:biv-matern-crosscov} is obtained by coupling the two
univariate operators through a lower-triangular dependence
matrix. We use
\begin{equation*}
\mv{D}(\rho_S)
:=
\begin{pmatrix}
1 & 0\\
 -\frac{\rho_S}{\sqrt{1-\rho_S^2}} & \frac{1}{\sqrt{1-\rho_S^2}} 
\end{pmatrix}.
\end{equation*}
This is the bivariate triangular parametrisation written
directly in terms of the correlation parameter $\rho_S$. Indeed, if
$\mv{R}:=\mv{D}(\rho_S)^{-1}$, then the off-diagonal entry of
$\mv{R}\mv{R}^{\top}$ is $\rho_S$. If $\widetilde\rho_S$ denotes the raw
lower-triangular parameter used in \citet{bolin2020multivariate}, then
$\rho_S := \widetilde\rho_S/\sqrt{1+\widetilde\rho_S^2}$.With
$\tau_k^2 := \bigl(4\pi\nu_{k}\,\sigma_k^2
\kappa_k^{2\nu_{k}}\bigr)^{-1}$, the coupled SPDE system is
\begin{equation}
\label{eq:bispde}
\mv{D}(\rho_S)
\begin{bmatrix}
\tau_1\mathcal{L}_1 & 0\\
0 & \tau_2\mathcal{L}_2
\end{bmatrix}
\begin{bmatrix}
S_1(\mv{x})\\
S_2(\mv{x})
\end{bmatrix}
=
\begin{bmatrix}
\mathcal{W}_1(\mv{x})\\
\mathcal{W}_2(\mv{x})
\end{bmatrix},
\end{equation}
where $\mathcal{W}_1$ and $\mathcal{W}_2$ are independent
Gaussian white-noise processes. The connection between this
SPDE representation and the covariance specification is stated
formally below. 

As the semi-explicit form of the cross-covariance was not previously established, we provide the following proposition:
\begin{proposition}
\label{prop:biv-matern-validity}
Let $\nu_{k}>0$, $\kappa_k>0$, and $\sigma_k>0$ for
$k=1,2$, and let $\rho_S\in(-1,1)$. For a spatial lag
$\mathbf{h} := \mv{x}'-\mv{x}\in\mathbb{R}^2$, put
$r := \|\mathbf{h}\|$ and define
\begin{equation*}
\mv{C}(\mathbf{h})
:=
\begin{pmatrix}
C_{11}(r) & C_{12}(r)\\
C_{12}(r) & C_{22}(r)
\end{pmatrix},
\end{equation*}
where the marginal covariances are given by
\eqref{eq:matern_cov} and the cross-covariance is given by
\eqref{eq:biv-matern-crosscov}. Then $\mv{C}$ is a valid
stationary and isotropic bivariate covariance function.
Moreover, the coupled system \eqref{eq:bispde} admits a
mean-zero stationary Gaussian weak solution, unique in
distribution, whose covariance function is $\mv{C}$.
\end{proposition}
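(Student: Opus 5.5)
The plan is to establish both claims at once by constructing the field explicitly from the SPDE system \eqref{eq:bispde} and computing its matrix spectral density; validity of $\mv{C}$ then follows because it is the covariance of an actual Gaussian field.

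\textbf{Step 1: invert the triangular system.} Since $\mv{D}(\rho_S)$ is invertible with inverse
\[
\mv{R}=\begin{pmatrix}1&0\\ \rho_S & \sqrt{1-\rho_S^2}\end{pmatrix},
\]
the system is equivalent to $\tau_k\mathcal{L}_k S_k = \widetilde{\mathcal{W}}_k$, where $(\widetilde{\mathcal{W}}_1,\widetilde{\mathcal{W}}_2)^\top := \mv{R}(\mathcal{W}_1,\mathcal{W}_2)^\top$. This is a pair of white noises with unit variances and cross-covariance $\rho_S$, because $\mv{R}\mv{R}^\top$ has unit diagonal and off-diagonal $\rho_S$.

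Each equation is a Whittle SPDE \eqref{eq:univ-spde-whittle}. Work in the space of tempered distributions and define the solution in Fourier space:
\[
\hat S_k(\mv{\omega}) = \tau_k^{-1}\,(\kappa_k^2+\|\mv{\omega}\|^2)^{-\alpha_k/2}\,\hat{\widetilde{\mathcal{W}}}_k(\mv{\omega}).
\]
The symbol of $\mathcal{L}_k$ is strictly positive, so $\mathcal{L}_k$ is injective on stationary solutions. Hence every mean-zero stationary weak solution must have this spectral representation, which gives uniqueness in distribution.

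The resulting matrix spectral density is
\[
f_{jk}(\mv{\omega}) = (2\pi)^{-2}\, c_{jk}\,\tau_j^{-1}\tau_k^{-1}(\kappa_j^2+\|\mv{\omega}\|^2)^{-\alpha_j/2}(\kappa_k^2+\|\mv{\omega}\|^2)^{-\alpha_k/2},
\]
with $c_{11}=c_{22}=1$ and $c_{12}=\rho_S$. It is Hermitian and positive semidefinite at every $\mv{\omega}$, since it equals $A(\mv{\omega})\,\mv{R}\mv{R}^\top A(\mv{\omega})$ for a diagonal real $A(\mv{\omega})$. It is also integrable, because $\alpha_k>1$ in $d=2$. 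By Cramér's theorem (the matrix-valued Bochner theorem), its Fourier transform is therefore a valid stationary covariance. It is isotropic because $f$ depends on $\mv{\omega}$ only through $\|\mv{\omega}\|$. Existence of the weak solution follows by taking the Gaussian field with this spectral measure and checking the SPDE against test functions.

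\textbf{Step 2: identify the entries.} The diagonal entries are the standard Whittle computation. With $\alpha_k=\nu_k+1$ and the stated $\tau_k^2$, inverse Fourier transforming $(\kappa_k^2+\|\mv{\omega}\|^2)^{-\alpha_k}$ gives $\sigma_k^2\mathcal{M}(\kappa_k r;\nu_k)$, which is \eqref{eq:matern_cov}.

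The off-diagonal entry is the main obstacle: showing that the inverse Fourier transform of $(\kappa_1^2+s)^{-q_1}(\kappa_2^2+s)^{-q_2}$, with $s=\|\mv{\omega}\|^2$, equals the integral in \eqref{eq:biv-matern-crosscov}. I would use the Feynman-parameter identity
\[
\frac{1}{A^{q_1}B^{q_2}} = \frac{\Gamma(q_1+q_2)}{\Gamma(q_1)\Gamma(q_2)}\int_0^1 \frac{u^{q_1-1}(1-u)^{q_2-1}}{\{uA+(1-u)B\}^{q_1+q_2}}\,du,
\]
with $A=\kappa_1^2+s$ and $B=\kappa_2^2+s$. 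Then $uA+(1-u)B=\kappa_u^2+s$ and $q_1+q_2=\bar\nu+1$.

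Each integrand is the spectral density of a Matérn field with smoothness $\bar\nu$ and inverse range $\kappa_u$. Its inverse transform is $(4\pi\bar\nu)^{-1}\kappa_u^{-2\bar\nu}\mathcal{M}(\kappa_u r;\bar\nu)$, up to the same normalisation as in the diagonal case. The interchange of the $u$-integral and the Fourier integral is justified by Fubini/Tonelli, since everything is nonnegative and integrable.

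\textbf{Step 3: collect constants.} Collecting $\tau_1^{-1}\tau_2^{-1}=4\pi\sqrt{\nu_1\nu_2}\,\sigma_1\sigma_2\kappa_1^{\nu_1}\kappa_2^{\nu_2}$ together with $\Gamma(\bar\nu+1)/(4\pi\bar\nu)=\Gamma(\bar\nu)/(4\pi)$ should reproduce exactly the prefactor in \eqref{eq:biv-matern-crosscov}. This bookkeeping with the $2\pi$ conventions is the step I would check most carefully. As a sanity check, setting $\kappa_1=\kappa_2$ and $\nu_1=\nu_2$ should collapse the expression to $\rho_S\sigma_1\sigma_2\mathcal{M}$.
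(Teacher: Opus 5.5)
Your proposal is correct and follows essentially the same route as the paper. Both arguments use the matrix spectral density $(2\pi)^{-2}A(\boldsymbol\omega)\mv{R}\mv{R}^{\top}A(\boldsymbol\omega)$ with Cram\'er's theorem for validity. Both then apply the Euler-beta (Feynman-parameter) identity, so that the cross-spectrum becomes a mixture of Mat\'ern spectra with smoothness $\bar\nu$ and inverse range $\kappa_u$, and finish with the same constant bookkeeping, which does reproduce \eqref{eq:biv-matern-crosscov}. Your explicit injectivity argument for uniqueness in distribution and your Fubini justification are small additions; the paper leaves these to \citet[Proposition~1]{bolin2020multivariate}.
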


The cross-covariance in \eqref{eq:biv-matern-crosscov} admits two
useful closed-form reductions. When the two fields share a common
inverse-range parameter, it coincides with the parsimonious
bivariate Mat\'{e}rn form of \citet{gneiting2010matern}. A second
closed form is available when the inverse-range parameters differ
but both marginal smoothness parameters are equal to one. These
reductions are stated formally in the following corollary.
The correlation
$\rho:=\operatorname{Corr}\{S_1(\mv{x}),S_2(\mv{x})\}$ need not equal
$\rho_S$ and, for fixed marginal Mat\'{e}rn parameters, may have an attainable
range that is a strict subset of $(-1,1)$; see
Corollary~\ref{cor:biv-matern-correlation} in Appendix~\ref{app:proofs}.
\begin{corollary}
\label{cor:biv-matern-special-cases}
Under the conditions of Proposition~\ref{prop:biv-matern-validity},
and if
$\kappa_1=\kappa_2=\kappa$ , then the cross-covariance has the closed form
\begin{equation}
\label{eq:biv-matern-common-range}
C_{12}(r)
=
\rho_S\,\sigma_1\sigma_2
\frac{\sqrt{\nu_1\nu_2}}{\bar\nu}\,
\mathcal{M}(\kappa r;\bar\nu).
\end{equation}
If $\kappa_1\neq \kappa_2$ and $\nu_1=\nu_2=1$, the cross-covariance has the
closed form
\begin{equation}
\label{eq:biv-matern-unit-smoothness}
C_{12}(r)=
2\rho_S\sigma_1\sigma_2\kappa_1\kappa_2
\frac{K_0(\kappa_1r)-K_0(\kappa_2r)}
{\kappa_2^2-\kappa_1^2}.
\end{equation}
\end{corollary}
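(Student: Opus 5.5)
Both identities follow by evaluating the $u$-integral in \eqref{eq:biv-matern-crosscov} directly under the stated parameter restrictions. Proposition~\ref{prop:biv-matern-validity} already guarantees that $C_{12}$ is given by \eqref{eq:biv-matern-crosscov}, so only elementary calculus with Beta and Bessel functions is needed. I would treat the two cases separately and, throughout, work with $r>0$, recovering $r=0$ at the end by continuity.

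\emph{Common inverse range.} If $\kappa_1=\kappa_2=\kappa$, then $\kappa_u^2=\kappa^2$ for every $u\in(0,1)$. The factor $\kappa_u^{-2\bar\nu}\mathcal{M}(\kappa_u r;\bar\nu)=\kappa^{-2\bar\nu}\mathcal{M}(\kappa r;\bar\nu)$ therefore comes out of the integral. What remains is the Beta integral
\begin{equation*}
\int_0^1 u^{q_1-1}(1-u)^{q_2-1}\,du=\frac{\Gamma(q_1)\Gamma(q_2)}{\Gamma(q_1+q_2)} .
\end{equation*}
Since $q_1+q_2=\bar\nu+1$, the Gamma factors in the prefactor cancel and leave $\Gamma(\bar\nu)/\Gamma(\bar\nu+1)=1/\bar\nu$. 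The powers of $\kappa$ also cancel, because $\kappa^{\nu_1+\nu_2}\kappa^{-2\bar\nu}=1$. This yields \eqref{eq:biv-matern-common-range}.

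\emph{Unit smoothness, distinct ranges.} If $\nu_1=\nu_2=1$, then $q_1=q_2=\bar\nu=1$ and all Gamma factors equal one. The prefactor reduces to $\rho_S\sigma_1\sigma_2\kappa_1\kappa_2$, and since $\mathcal{M}(z;1)=zK_1(z)$ the integrand is $\kappa_u^{-1}rK_1(\kappa_u r)$. I would substitute $s=\kappa_u$, so that $s^2=\kappa_2^2+u(\kappa_1^2-\kappa_2^2)$ and $du=2s\,ds/(\kappa_1^2-\kappa_2^2)$, with $s$ running from $\kappa_2$ to $\kappa_1$. The integral becomes
\begin{equation*}
\frac{2}{\kappa_1^2-\kappa_2^2}\int_{\kappa_2}^{\kappa_1} rK_1(sr)\,ds .
\end{equation*}
Using the Bessel identity $K_0'=-K_1$, we have $\frac{d}{ds}K_0(sr)=-rK_1(sr)$. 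The integral therefore equals $K_0(\kappa_2 r)-K_0(\kappa_1 r)$. Rearranging the sign gives \eqref{eq:biv-matern-unit-smoothness}.

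The only point needing care is $r=0$, where $K_0$ diverges. There I would note that $K_0(\kappa_1 r)-K_0(\kappa_2 r)\to\log(\kappa_2/\kappa_1)$ as $r\to0$, using $K_0(z)=-\log(z/2)-\gamma+o(1)$ with $\gamma$ the Euler--Mascheroni constant. The right-hand side of \eqref{eq:biv-matern-unit-smoothness} should therefore be read as this finite limit at $r=0$. The left-hand side \eqref{eq:biv-matern-crosscov} is continuous in $r$ at $0$ by dominated convergence, since $\mathcal{M}(\cdot;1)\le1$, so the two sides agree at $r=0$ as well. No step is a genuine obstacle; the main risk is bookkeeping of signs and limits in the substitution.
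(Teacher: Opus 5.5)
Your proof is correct. For the common-range case it is the paper's argument: you pull $\kappa^{-2\bar\nu}\mathcal{M}(\kappa r;\bar\nu)$ out of the integral and evaluate the Beta integral.

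For the unit-smoothness case you take a different route.

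\begin{itemize}
\item \textbf{The paper's route.} It returns to the spectral domain and splits the cross-spectrum by partial fractions, $\{(\kappa_1^2+\|\boldsymbol\omega\|^2)(\kappa_2^2+\|\boldsymbol\omega\|^2)\}^{-1}=(\kappa_2^2-\kappa_1^2)^{-1}\{(\kappa_1^2+\|\boldsymbol\omega\|^2)^{-1}-(\kappa_2^2+\|\boldsymbol\omega\|^2)^{-1}\}$. It then inverts each term using the two-dimensional Green's function $(2\pi)^{-1}K_0(\kappa r)$ of $\kappa^2-\Delta$.
\item \textbf{Your route.} You stay with the already-established $u$-integral \eqref{eq:biv-matern-crosscov}, substitute $s=\kappa_u$, and integrate exactly using $K_0'=-K_1$.
\end{itemize}

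The two arguments encode the same identity. The $u$-representation is the Feynman-parameter form of the product of the two spectral factors, and integrating out $u$ when $q_1=q_2=1$ reproduces the partial-fraction decomposition. They nonetheless buy different things.

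The paper's version is shorter and shows directly why $K_0$ terms appear. However, each term $(\kappa^2+\|\boldsymbol\omega\|^2)^{-1}$ is not integrable on $\mathbb{R}^2$; only their difference is. So the termwise inversion has to be read as an improper or distributional transform, valid only for $r>0$.

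Your version uses only absolutely convergent integrals and an elementary Bessel identity. You also settle the case $r=0$, which the paper's proof leaves implicit. You show that the right-hand side extends continuously to $2\rho_S\sigma_1\sigma_2\kappa_1\kappa_2\log(\kappa_2/\kappa_1)/(\kappa_2^2-\kappa_1^2)$. This agrees with the direct evaluation of $C_{12}(0)=\rho_S\sigma_1\sigma_2\kappa_1\kappa_2\int_0^1\kappa_u^{-2}\,du$.
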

The proofs of the Proposition and the Corollary are given in Appendix~\ref{app:proofs}. An example of covariance function and simulated fields based on \eqref{eq:biv-matern-unit-smoothness} are shown in Figures \ref{fig:crosscov-sim-combined}.

\subsubsection{Finite element methods approximation}

The SPDE representation becomes computationally useful after replacing the
continuous fields by finite-dimensional basis expansions. Following the
SPDE--FEM construction of \citet{lindgren2011explicit}, we triangulate the
spatial domain and associate a continuous, piecewise-linear basis function
$\psi_j$ with each of the $m$ mesh vertices. Each basis function equals one at
its own vertex, zero at all other vertices, and has support only on the
triangles meeting that vertex. For component $k$, the field is then
approximated by
\begin{equation*}
S_k(\mv{x})
\;\approx\;
\sum_{j=1}^{m}\psi_j(\mv{x})w_{k,j}
\;=\;
\mv{\psi}(\mv{x})^\top\mv{w}_k,
\end{equation*}
where $\mv{\psi}(\mv{x}) :=
\{\psi_1(\mv{x}),\ldots,\psi_m(\mv{x})\}^\top$ and
$\mv{w}_k := (w_{k,1},\ldots,w_{k,m})^\top$ is the vector of
random basis weights. We construct the constrained Delaunay triangulation and
the corresponding projection matrices using the \texttt{fmesher} package
\citep{lindgren2024fmesher}.

Because the spatial fields in \eqref{eq:spatial_mixing_probs} are latent,
the data may contain limited information about their differentiability.
Moreover, allowing general smoothness values requires rational approximations
of the fractional SPDE operator, as implemented in the \texttt{rSPDE} package
\citep{bolin2024rspde}. We therefore fix the smoothness parameters at
$\nu_1=\nu_2=1$. This choice gives $\alpha_k=2$ in
\eqref{eq:univ-spde-whittle}, the lowest integer operator order available in
two dimensions, and hence the most computationally efficient FEM
representation considered here. It also places the model in the
unit-smoothness case with the explicit cross-covariance stated in
Corollary~\ref{cor:biv-matern-special-cases}.

The distribution of the weights $\mv{w}_k$ is obtained by taking the weak
solution of the component SPDE in \eqref{eq:univ-spde-whittle} and applying
the Galerkin method \citep{lindgren2011explicit}. For $\nu_{k}=1$, this
gives $\mv{L}_k := \tau_k(\kappa_k^2\mv{C}+\mv{G})$ and the sparse precision
matrix
\begin{equation*}
\mv{w}_k
\sim
\mathcal N(\mv{0},\mv{Q}_k^{-1}),
\qquad
\mv{Q}_k
:=
\mv{L}_k^\top\mv{C}^{-1}\mv{L}_k
=
\tau_k^2\!\left(
\kappa_k^4\mv{C}
+2\kappa_k^2\mv{G}
+\mv{G}\mv{C}^{-1}\mv{G}
\right).
\end{equation*}
Here $\mv{C}$ is a diagonal matrix with
$C_{jj} := \int_{\mathcal D}\psi_j(\mv{x})\,d\mv{x}$, and $\mv{G}$ is obtained from
$G_{j\ell} := \int_{\mathcal D}\nabla\psi_j(\mv{x})^\top
\nabla\psi_\ell(\mv{x})\,d\mv{x}$. The local support of the basis functions
makes these matrices, and hence $\mv{Q}_k$, sparse. Thus, the weak FEM
solution is a Gaussian Markov random field that replaces dense
Gaussian-process covariance calculations by sparse matrix operations.

Following the multivariate SPDE construction of
\citet{bolin2020multivariate}, the bivariate field is obtained analogously by
taking the weak solution of the coupled system in \eqref{eq:bispde} and
applying the same Galerkin approximation to both components. Let
$\mv{L} := \operatorname{bdiag}(\mv{L}_1,\mv{L}_2)$,
$\mv{C}_{\mv{S}} := \operatorname{bdiag}(\mv{C},\mv{C})$, and
$\mv{D}_m := \mv{D}(\rho_S)\otimes\mv{I}_m$. The corresponding discrete
operator is $\mv{K} := \mv{D}_m\mv{L}$, and the stacked weights
$\mv{w} := (\mv{w}_1^\top,\mv{w}_2^\top)^\top$ form the joint
Gaussian Markov random field
\begin{equation*}
\mv{w}
\sim
\mathcal N(\mv{0},\mv{Q}^{-1}),
\qquad
\mv{Q}
:=
\mv{K}^\top\mv{C}_{\mv{S}}^{-1}\mv{K}.
\end{equation*}

Finally, let $\mv{A}\in\mathbb R^{n\times m}$ be the observation matrix with
$A_{ij} := \psi_j(\mv{x}_i)$, and define
$\mv{B} := \operatorname{bdiag}(\mv{A},\mv{A})$. The stacked spatial
effects at the observation locations are evaluated as
\begin{equation*}
\begin{pmatrix}
S_1(\mv{x}_1),\ldots,S_1(\mv{x}_n),
S_2(\mv{x}_1),\ldots,S_2(\mv{x}_n)
\end{pmatrix}^\top
\;\approx\;
\mv{B}\mv{w}.
\end{equation*}

\subsection{Inference via the Laplace approximation}
\label{sec:inference}

Maximum likelihood estimation requires integrating out two sets
of latent variables, namely the continuous seroreactivity latent variables $\mv{T}_i$ and the spatial random field $\mv{S}(\mv{x})$. Our proposal is to use the Laplace approximation to
obtain an efficient approximation of the marginal likelihood. Two numerical integrations are required
in the present model. For the latent variables
$\mv{T}_i\in(0,1)^2$ we use the deterministic midpoint
approximation defined in~\eqref{eq:biv_latentT_quadrature}, whereas the much higher-dimensional vector of FEM weights is integrated using a mode-centred Gaussian approximation. We first describe how the Laplace approximation yields the objective function used for parameter estimation, and then give the numerical details of the implementation used for the analyses presented in this paper. 

\begin{algorithm}[H]
\small
\captionsetup{}
\caption{Evaluation  Laplace approximation.}
\label{alg:laplace-evaluation}
\begin{algorithmic}[1]
\Require 
$\vartheta,\mv{w}_{0}$
\State  $k\gets0$
\While{$\|\nabla\varphi(\mv{w}_k)\|_\infty\geq10^{-7}$
and $k<50$} 
\State 
$\mv{W}_\ell^{(k)} :=
\mv{W}_\ell(\mv{w}_k;\vartheta)$. \Comment{Eq.~\eqref{eq:laplace-local-curvature}}
  Build $\mv{H}^{\mathrm{search}}_k$.
  \Comment{Eq.~\eqref{eq:laplace-search-hessian}}
  \State
  $\mv{d}_k\gets
  (\mv{H}^{\mathrm{search}}_k)^{-1}\nabla\varphi(\mv{w}_k)$.
  \State Set $\mv{w}_{k+1}\gets\mv{w}_k-a_k\mv{d}_k$ and
  $k\gets k+1$. \Comment{select $a_k$ by Armijo
  backtracking.}
\EndWhile
\State $\mv{W}_\ell^{(k)} :=
\mv{W}_\ell(\mv{w}_k;\vartheta)$.
\State Build the exact Hessian
$\mv{H}_\vartheta(\widehat{\mv{w}})$.
\Comment{Eq.~\eqref{eq:laplace-exact-hessian}}
\State \Return $\log\widetilde L(\vartheta)$.
\Comment{Eq.~\eqref{eq:laplace}}
\end{algorithmic}
\end{algorithm}

After integrating out
$\mv{T}_i$ on the finite grid, let
$\mv{s}_i := (\mv{B}\mv{w})_i$. The conditional log-likelihood is then
$\ell_\vartheta(\mv{w}) :=
\sum_{i=1}^n\log p(\mv{y}_i\mid\mv{s}_i,
\mv{d}_i;\vartheta)$.

The marginal likelihood for
$\vartheta$ is therefore
\begin{equation*}
L(\vartheta)
:=
\frac{|\mv{Q}|^{1/2}}{(2\pi)^{p/2}}
\int_{\mathbb R^p}
\exp\!\left\{
\ell_\vartheta(\mv{w})
-\tfrac{1}{2}\mv{w}^{\top}\mv{Q}\mv{w}
\right\}
d\mv{w},
\end{equation*}
where $p := 2m$ is the dimension of the bivariate FEM weight
vector. This integral is not available in closed form for
the mixture likelihood used here.

The implemented approximation expands the negative joint
log-density
\begin{equation*}
\varphi_\vartheta(\mv{w})
:=
-\ell_\vartheta(\mv{w})
+\tfrac{1}{2}\mv{w}^{\top}\mv{Q}\mv{w},
\end{equation*}
to second order about a stationary point
$\widehat{\mv{w}}(\vartheta)$.

Let
$\mv{H}_\vartheta(\mv{w}):=
\nabla_{\mv{w}}^2\varphi_\vartheta(\mv{w})$ denote the exact
Hessian of the negative joint log-density. Provided that the mode
search converges to a stationary point $\widehat{\mv{w}}(\vartheta)$
and that $\mv{H}_\vartheta(\widehat{\mv{w}})$ is positive definite,
the leading-order Laplace approximation is the classical
mode-and-curvature construction \citep{tierney1986accurate}:
\begin{equation}
\label{eq:laplace}
\log \widetilde L(\vartheta)
:=
\ell_\vartheta(\widehat{\mv{w}})
-\tfrac{1}{2}\widehat{\mv{w}}^{\top}
\mv{Q}\widehat{\mv{w}}
+\tfrac{1}{2}\log|\mv{Q}|
-\tfrac{1}{2}\log
\left|\mv{H}_\vartheta(\widehat{\mv{w}})\right|.
\end{equation}
No eigenvalue clipping or diagonal jitter is used in the final
determinant. If the exact Hessian is not positive definite, the
corresponding outer-objective evaluation is declared invalid rather
than replacing its curvature by a stabilised approximation.
Equation~\eqref{eq:laplace} is the log-scale form of the mode-evaluated
density-ratio construction in \citep{rueheld2005}.

Algorithm~\ref{alg:laplace-evaluation} gives the computation
performed at each outer objective evaluation. In what follows, we decompose the full parameter vector $\vartheta$ as
$\vartheta := (\vartheta_0,\vartheta_1)$, where $\vartheta_0$
collects the non-spatial parameters (those governing the
observation model~\eqref{eq:obs_joint}, the truncated Gaussian
mixture~\eqref{eq:biv_latentT_density}, and the sequential
logistic weights~\eqref{eq:mixing_probs}) and $\vartheta_1$
collects the spatial covariance parameters
$(\sigma_1,r_1,\sigma_2,r_2,\rho_S)$. The likelihood
and curvature terms entering its conditional-mode search are
obtained as follows. At a given value of the
non-spatial parameter block $\vartheta_0$, the midpoint grid is
used to compute the four component evidences
\begin{equation*}
c_{i\mv{z}}(\vartheta_0)
\approx
\sum_{r=1}^{M}\sum_{s=1}^{M}
\widetilde w_{i,r,s}(\mv{z};\vartheta_0)
p(\mv{y}_i\mid\mv{T}_i=\mv{t}_{r,s};\vartheta_0).
\end{equation*}
Conditional on the projected field
$\mv{s} := \mv{B}\mv{w}$, the individual likelihood then has
the finite-mixture form
\begin{equation*}
p(\mv{y}_i\mid\mv{s}_i,\mv{d}_i;\vartheta)
\approx
\sum_{\mv{z}\in\{0,1\}^2}
\pi_i(\mv{z};\mv{s}_i,\vartheta)
c_{i\mv{z}}(\vartheta_0).
\end{equation*}

The gradient with respect to the projected bivariate field is
computed analytically as a mixture-responsibility-weighted average
of the per-component score contributions, where the responsibility
of component $\mv{z}$ for individual $i$ is the posterior
probability $\pi_i(\mv{z};\mv{s}_i,\vartheta)\,
c_{i\mv{z}}(\vartheta_0) \big/ \sum_{\mv{z}'}
\pi_i(\mv{z}';\mv{s}_i,\vartheta)\, c_{i\mv{z}'}(\vartheta_0)$, and
is aggregated over individuals at the same spatial location. Let
$\mathcal I_\ell$ index the observations at unique location
$\ell$; the resulting raw curvature block is
\begin{equation}
\label{eq:laplace-local-curvature}
\mv{W}_\ell(\mv{w};\vartheta)
:=
-\left.
\nabla^2_{\mv{s}_\ell}
\sum_{r\in\mathcal I_\ell}
\log p(\mv{y}_r\mid\mv{s}_\ell,
\mv{d}_r;\vartheta)
\right|_{\mv{s}_\ell=(\mv{B}\mv{w})_\ell}.
\end{equation}
This is a $2\times2$ matrix because the projected field is
bivariate. Since finite-mixture likelihoods need not be
log-concave, $\mv{W}_\ell$ can be indefinite.

The exact global Hessian of the negative joint
log-density is
\begin{equation}
\label{eq:laplace-exact-hessian}
\mv{H}_\vartheta(\mv{w})
:=
\mv{Q}
+\mv{B}^{\top}
\operatorname{bdiag}\{\mv{W}_1(\mv{w};\vartheta),\ldots,
\mv{W}_L(\mv{w};\vartheta)\}\mv{B}.
\end{equation}
Indefiniteness of an individual likelihood-curvature block does not
by itself invalidate the Laplace approximation, because the precision
term $\mv{Q}$ can make the global Hessian positive definite.

The code
replaces each raw block during mode finding by
\begin{equation*}
\widetilde{\mv{W}}_\ell
:=
\mv{W}_\ell
+\max\{0,-\lambda_{\min}(\mv{W}_\ell)\}\mv{I}_2,
\end{equation*}
so that its smallest eigenvalue is zero.

At iteration $k$, the clipped blocks and any
factorisation jitter are used to construct only the search matrix
\begin{equation}
\label{eq:laplace-search-hessian}
\mv{H}^{\mathrm{search}}_k
:=
\mv{Q}
+\mv{B}^{\top}
\operatorname{bdiag}\{
\widetilde{\mv{W}}_1(\mv{w}_k;\vartheta),\ldots,
\widetilde{\mv{W}}_L(\mv{w}_k;\vartheta)
\}\mv{B}
+j_k\mv{I}_p.
\end{equation}
The value $j_k$ is initially zero and is increased only if the sparse
factorisation fails. This matrix determines the Newton search direction,
whereas the gradient and Armijo line search are evaluated using the
original objective $\varphi_\vartheta$. Thus, neither blockwise clipping
nor numerical jitter changes the mode-defining objective or enters the
Laplace determinant. After convergence, the raw blocks are recomputed at
$\widehat{\mv{w}}$ and assembled into the exact Hessian
in~\eqref{eq:laplace-exact-hessian}.

Sparse Cholesky factorisations are used for the
stabilised search systems and, separately, for the undamped exact Hessian
in the final determinant; both are computed with CHOLMOD
\citep{chen2008cholmod}.
For a quasi-regular two-dimensional
mesh, a factorisation typically requires $O(p^{3/2})$
operations, where $p := 2m$ is the dimension of the bivariate FEM
weight vector $\mv{w}$, compared with $O(p^3)$ for dense
factorisation \citep[Section~2.4.3]{rueheld2005}. If $K$
factorisations are required to locate the conditional mode, a
representative cost per outer evaluation is therefore
$O\{nM^2+(K+1)p^{3/2}\}$, where $n$ is the number of individuals
and $M^2$ is the number of quadrature cells used to approximate
the integral over each $\mv{T}_i$, although the realised cost
depends on the mesh ordering and fill-in.

Let $\eta$ denote the unconstrained working
parameter vector used by the outer optimiser, and write
$\vartheta := \mv{g}(\eta)$ for the transformation to the
natural parameterisation. Positivity constraints are represented
by logarithmic working parameters, correlations by hyperbolic
tangent transformations, and ordered component locations by the
sequential links defined in Section~\ref{sec:model}. The
construction applies to both univariate and bivariate fits. For a
non-spatial fit there are no FEM weights.

For a spatial fit, the Laplace-approximated marginal
log-likelihood is
\begin{equation*}
\widetilde\ell(\eta)
:=
\ell_{\mv{g}(\eta)}
\!\left\{\widehat{\mv{w}}\bigl(\mv{g}(\eta)\bigr)\right\}
-\frac{1}{2}
\widehat{\mv{w}}\bigl(\mv{g}(\eta)\bigr)^\top
\mv{Q}\bigl(\mv{g}(\eta)\bigr)
\widehat{\mv{w}}\bigl(\mv{g}(\eta)\bigr)
+\frac{1}{2}\log\left|\mv{Q}\bigl(\mv{g}(\eta)\bigr)\right|
-\frac{1}{2}\log\left|
\mv{H}_{\mv{g}(\eta)}
\!\left\{\widehat{\mv{w}}\bigl(\mv{g}(\eta)\bigr)\right\}
\right|.
\end{equation*}
This notation makes explicit that both the conditional mode and
the exact determinant are functions of the outer parameters, so
that $\widetilde\ell(\eta)$ is the estimation objective
maximised over the working scale to obtain
$\widehat{\eta}$.

\subsection{Population-level predictive targets}
\label{sec:predictive-targets}

When considering spatial prediction, the hierarchical structure of
the model enables the definition of both population-level and
individual-level predictive targets. In the epidemiological
applications for which the models presented here are intended,
population-level targets are often of greater interest, since
cross-sectional surveys are typically carried out to estimate the
burden of disease or the impact of interventions in the general
population, rather than to characterise any single individual. To
this end, we provide examples of predictive targets that consider
the prediction for a generic individual of a given age $a$ at a
location $\mv{x}$, conditional on the realised spatial field
$\mv{S}(\mv{x}) := \{S_1(\mv{x}),S_2(\mv{x})\}^\top$. Throughout this
section we write $\pi\{\mv{z};a,\mv{S}(\mv{x})\}$ for the
sequential logistic mixing probability
of~\eqref{eq:spatial_mixing_probs}, evaluated at the covariate
vector corresponding to age $a$ and at the spatial field value
$\mv{S}(\mv{x})$. Covariates other than age are suppressed in the
notation for readability, and the individual index $i$ is dropped
because the targets refer to a hypothetical rather than an
observed individual.

A natural choice of predictive target is the expected
seroreactivity for a given antigen $k$,
\begin{equation}
\label{eq:predictive-target-T}
\E\{T_k(a)\mid\mv{S}(\mv{x})\} = \sum_{\mv{z}\in\{0,1\}^2}
\pi\{\mv{z};a,\mv{S}(\mv{x})\}\, \bar t_k(\mv{z}),
\end{equation}
where $\bar t_k(\mv{z}) := \E(T_k\mid\mv{Z}=\mv{z})$ is the $k$th
marginal mean of the truncated Gaussian component
in~\eqref{eq:biv_latentT_component}. Note that $\bar t_k(\mv{z})$
depends on the full vector $\mv{z}$, and not on $z_k$ alone, since
truncation to $(0,1)^2$ couples the two coordinates whenever
$\rho_T\neq0$. This target summarises, at a given location and
age, the expected level of immune activation to antigen $k$ in the
general population, and can be mapped across the study region to
identify areas of higher or lower average exposure, informing the
geographical targeting of interventions or the prioritisation of
surveillance resources towards locations where seroreactivity
indicates ongoing or historically high transmission.

A second predictive target of interest is the expected antibody
concentration for antigen $k$. Recalling
from~\eqref{eq:obs_mean_var} that the mean of $Y_k$ given $T_k=t$
is $\mu_{k,0} + t(\mu_{k,1}-\mu_{k,0})$, the law of total
expectation gives
\begin{equation}
\label{eq:predictive-target-Y}
\E\{Y_k(a)\mid\mv{S}(\mv{x})\} = \sum_{\mv{z}\in\{0,1\}^2}
\pi\{\mv{z};a,\mv{S}(\mv{x})\}\,
\bigl[\mu_{k,0} + \bar t_k(\mv{z})(\mu_{k,1}-\mu_{k,0})\bigr].
\end{equation}
Because this expression is linear in $\bar t_k(\mv{z})$, the
weighted sum can be taken inside the bracket, giving the
equivalent, simpler form
\begin{equation*}
\E\{Y_k(a)\mid\mv{S}(\mv{x})\} = \mu_{k,0} +
\bigl\{\mu_{k,1}-\mu_{k,0}\bigr\}\, \E\{T_k(a)\mid\mv{S}(\mv{x})\},
\end{equation*}
where $\E\{T_k(a)\mid\mv{S}(\mv{x})\}$ is the target
in~\eqref{eq:predictive-target-T}. This simplification relies on
the observation-model mean being an affine function of $T_k$; for
a nonlinear mean it would not hold, and the mixture
sum~\eqref{eq:predictive-target-Y} would need to be evaluated
directly.

These predictive targets are implemented by Monte Carlo
integration over the posterior uncertainty in the spatial field. A
sample of finite-element weights is drawn from the
Laplace-approximate predictive distribution
$\mv{w}^{(b)}\mid\mv{y}\sim
N\!\left\{\widehat{\mv{w}},
\mv{H}_{\widehat\vartheta}(\widehat{\mv{w}})^{-1}\right\}$,
$b=1,\ldots,B$.
Each draw is projected to the location
of interest to give $\mv{S}^{(b)}(\mv{x})$. Because
\eqref{eq:predictive-target-T} and~\eqref{eq:predictive-target-Y}
are already closed-form expressions in $\mv{S}(\mv{x})$, no
further simulation of $\mv{Z}$ or $T_k$ is required at this stage.
Each draw $\mv{S}^{(b)}(\mv{x})$ is substituted directly
into~\eqref{eq:predictive-target-T}
and~\eqref{eq:predictive-target-Y} to obtain $B$ values of the
target, which are then averaged, or summarised by their empirical
distribution to report uncertainty alongside the point estimate,
to give the corresponding Monte Carlo estimate.

The targets in~\eqref{eq:predictive-target-T}
and~\eqref{eq:predictive-target-Y} are age-specific. If interest
instead lies in a summary that is representative of the general
population at a location $\mv{x}$, regardless of age, the effect
of age can be integrated out with respect to the age distribution
$f(a;\mv{x})$ of the population living at $\mv{x}$,
\begin{equation}
\label{eq:predictive-targets-age-marginal}
\E\{T_k\mid\mv{S}(\mv{x})\} = \int
\E\{T_k(a)\mid\mv{S}(\mv{x})\}\, f(a;\mv{x})\, da,
\qquad
\E\{Y_k\mid\mv{S}(\mv{x})\} = \int
\E\{Y_k(a)\mid\mv{S}(\mv{x})\}\, f(a;\mv{x})\, da.
\end{equation}
Monte Carlo approximations to the integrals
in~\eqref{eq:predictive-targets-age-marginal} are easily obtained
by averaging $\E\{T_k(a_j)\mid\mv{S}(\mv{x})\}$ and
$\E\{Y_k(a_j)\mid\mv{S}(\mv{x})\}$ over samples $a_j$ drawn from
$f(a;\mv{x})$. Two approaches can be considered for obtaining such
a sample, depending on the information available. If spatially
varying estimates of the age distribution are available, for
example from gridded population and demographic products such as
those provided by WorldPop \citep{Tatem2017}, or from census or
demographic surveillance data external to the serological survey,
ages can be drawn directly from $f(a;\mv{x})$ at the location of
interest. Alternatively, one can make the simplifying assumption
of a spatially homogeneous age distribution, $f(a;\mv{x})=f(a)$
for all $\mv{x}$, particularly in small study areas where
systematic differences in age structure across the region are
unlikely to be pronounced. Under this assumption, if the
individuals sampled in the serological survey are themselves
representative of the general population of the study area, the
single, population-wide empirical distribution of the observed
ages can be used in place of $f(a)$, so that Monte Carlo samples
are obtained simply by resampling from the ages already present in
the data, without the need for spatially varying external
information.

\section{Application: malaria antibodies in the Kenyan highlands}
\label{sec:application}

We analyse data from a cross-sectional serological survey conducted in Rachuonyo South District, western Kenyan highlands, described in \citet{Bousema2013} and \citet{Bousema2016}. Finger-prick blood samples were collected on filter paper and used to measure total immunoglobulin G antibody responses against two
\textit{Plasmodium falciparum} blood-stage antigens: apical
membrane antigen 1 (AMA1) and merozoite surface protein 1
(MSP1). Antibody concentrations are analysed on the log scale.
After retaining observations with complete data on both antigens
and age, the working dataset contains $15{,}578$ individuals
sampled at $3{,}042$ unique locations. Children under one year
of age are excluded to avoid confounding from maternally derived
antibodies. 

For exploratory analysis, individuals are partitioned
into the age bands $[1,5]$, $(5,10]$, $(10,15]$, $(15,20]$,
$(20,40]$, and $(40,100]$ years. Figure~\ref{fig:ppc-nonspatial} shows the age-stratified empirical distributions of the paired log-AMA1 and log-MSP1 measurements. Two features of the data stand out. First, there is a clear positive association between the two antibody responses that persists across all age groups. This is because both AMA1 and MSP1 are surface antigens expressed on the \textit{Plasmodium falciparum} merozoite during erythrocyte invasion, so a single blood-stage infection event simultaneously stimulates immune responses to both antigens, inducing co-variation in their antibody levels within individuals \citep{Beeson2016, Cowman2017}. Second, this dependence varies with age in ways that are not easily captured by simple parametric models: the joint distribution shifts from a pronounced bimodal
structure at young ages, reflecting a population polarised
between minimal and detectable immune responses, toward a
more concentrated unimodal form at older ages as repeated
exposure accumulates. \citet{giorgiwallin2026} demonstrated the
superiority of the univariate latent variable framework over
standard Gaussian mixture models for capturing this kind of
age-dependent heterogeneity in individual antigen responses.
Here we extend that framework to support the joint modelling
of AMA1 and MSP1, incorporating both age-dependent cross-antigen
dependence and the spatial correlation that arises from the
clustering of malaria transmission risk at fine geographical
scales.

We fit the bivariate geostatistical latent variable model of
Sections~\ref{sec:model} and~\ref{sec:spatial}, with a
specification motivated by the exploratory patterns above. The
covariate vector $\mv{d}_i$ entering the component-location
predictors \eqref{eq:biv_latentT_location_links} and the mixing
predictors \eqref{eq:mixing_probs} is a linear spline in
$\log(a_i)$ with a knot at age $10$ years,
\begin{equation}
\label{eq:age-spline}
\mv{d}_i := \bigl(1,\; \log a_i,\; \{\log a_i - \log 10\}_+\bigr)^\top,
\end{equation}
where $\{u\}_+ := \max(u,0)$, allowing the rate of seroreactivity
acquisition to differ between young children and older
individuals, consistent with the bimodal-to-unimodal transition
seen in Figure~\ref{fig:ppc-nonspatial}. We use a change point in
the age effect because, as shown in \citet{giorgiwallin2026}, the age dependence of seroreactivity for both antigens
is not well described by a single smooth trend across the whole age range. We set the knot at $10$ years based on empirical evidence, as this
choice provided a satisfactory fit based on the validation approach described below.

The cross-antigen association retains the exponential form of
\eqref{eq:mixing_probs},
\begin{equation*}
\delta(a_i) := \exp\bigl(\delta_0 + \delta_1 \log(a_i)\bigr),
\end{equation*}
entering, by the model's sequential construction, only the
conditional probability of high MSP1 seroreactivity given the
AMA1 regime. We denote the spatial scale, defined as the practical range
$\omega_k := \sqrt{8\nu_{k}}/\kappa_k$ of Section~\ref{sec:spatial}, by
$\omega_1$ and $\omega_2$ for AMA1 and MSP1 respectively. All parameters are
estimated using the
Laplace-approximated log-likelihood \eqref{eq:laplace}, and
uncertainty is quantified by a parametric bootstrap with
$1{,}500$ replicates \citep{efron1979bootstrap}.

Table~\ref{tab:bootstrap_redhot} reports the resulting estimates
and 95\% bootstrap confidence intervals. The estimated spatial scales, $\omega_1 \approx 473$m for AMA1
and $\omega_2 \approx 437$m  for MSP1, indicate strong,
fine-scale spatial variation in the probability of high seroreactivity for
both antigens, consistent
with the residential-scale clustering of malaria transmission
previously documented in this same study district using these
same two serological markers \citep{Stresman2017}. The estimated
cross-field correlation, $\rho_S = 0.587$, further indicates substantial spatial co-localisation between the two antigen-specific
probabilities of high seroreactivity, reflecting shared
environmental determinants of exposure to \textit{Plasmodium
falciparum}. The within-component correlation $\rho_T = 0.717$  indicates that, conditional on the exposure
environment at a given location $\mathbf{x}$, individuals with
higher AMA1 seroreactivity also tend to exhibit higher MSP1
seroreactivity, reflecting shared within-host immunological
mechanisms rather than shared spatial exposure alone. The
positive estimate $\delta_1 = 0.286$ indicates that this
cross-antigen association strengthens with age, consistent with
the accumulation of co-stimulation through repeated exposures
over the life course. For AMA1, the estimated standard deviation at the high-seroreactivity
boundary ($\sigma_{1,1}$) is roughly half that at the low-seroreactivity
boundary ($\sigma_{1,0}$); for MSP1 the reduction is more modest. Both
patterns are consistent with the expected assay structure, with
low- and high-seroreactivity means well separated for both antigens and
reduced spread at high seroreactivity as a result of assay saturation.

\begin{table}[ht!]
\centering
\footnotesize
\caption{Point estimates and 95\% bootstrap
confidence intervals for the parameters of the joint bivariate spatial model,
based on 1{,}500 parametric bootstrap replicates. For
antigen $k$, $\alpha_{k,0,1}$, $\alpha_{k,0,2}$, and $\alpha_{k,0,3}$ denote,
respectively, the intercept, the log-age slope, and the additional
change in slope above age 10 years, of the predictor $m_k(\mv{d}_i;0)$ for the
location of the low-seroreactivity component; $\alpha_{k,1,1}$, $\alpha_{k,1,2}$,
and $\alpha_{k,1,3}$ denote the same three coefficients of the predictor
$m_k(\mv{d}_i;1)$ for the location of the high-seroreactivity component; and
$\gamma_{k,1}$, $\gamma_{k,2}$, and $\gamma_{k,3}$ denote the same three
coefficients of the predictor for the mixing probability of antigen $k$, all
defined through the linear age spline in~\eqref{eq:age-spline}. The spatial
scale $\omega_k := \sqrt{8\nu_{k}}/\kappa_k$ is the practical range of
Section~\ref{sec:biv-matern}, with $\nu_{k}=1$ fixed for both antigens; $\zeta_k$
is the scale parameter of the truncated Gaussian mixture in
Eq.~\eqref{eq:biv_latentT_covariance}.}
\label{tab:bootstrap_redhot}
\begin{tabular}{llrrrrrr}
\toprule
 & & \multicolumn{3}{c}{AMA1 ($k=1$)} & \multicolumn{3}{c}{MSP1 ($k=2$)} \\
\cmidrule(lr){3-5} \cmidrule(lr){6-8}
Process & Parameter & Estimate & 2.5\% & 97.5\% & Estimate & 2.5\% & 97.5\% \\
\hline
\multirow{3}{*}{Spatial process $S$}
 & $\omega_k$ & 473.354 & 352.511 & 638.694 & 437.042 & 317.715 & 577.241 \\
 & $\sigma_k$ & 0.962 & 0.827 & 1.132 & 0.790 & 0.668 & 0.934 \\
 & $\rho_S$ & \multicolumn{6}{c}{0.587 (0.453, 0.711)} \\
\hline
\multirow{17}{*}{Seroreactivity process $T$}
 & \multicolumn{7}{l}{\textit{Low seroreactivity component}} \\
 & $\alpha_{k,0,1}$ & 0.296 & 0.256 & 0.327 & 0.416 & 0.360 & 0.463 \\
 & $\alpha_{k,0,2}$ & 0.104 & 0.096 & 0.113 & 0.049 & 0.042 & 0.056 \\
 & $\alpha_{k,0,3}$ & -0.038 & -0.051 & -0.025 & -0.001 & -0.012 & 0.012 \\
 & \multicolumn{7}{l}{\textit{High seroreactivity component}} \\
 & $\alpha_{k,1,1}$ & -0.733 & -0.829 & -0.642 & -1.209 & -1.275 & -1.133 \\
 & $\alpha_{k,1,2}$ & -0.179 & -0.220 & -0.131 & -0.087 & -0.114 & -0.062 \\
 & $\alpha_{k,1,3}$ & -0.342 & -0.424 & -0.264 & 0.029 & -0.026 & 0.085 \\
 & \multicolumn{7}{l}{\textit{Mixing probabilities}} \\
 & $\gamma_{k,1}$ & -3.889 & -4.282 & -3.412 & -0.465 & -0.738 & -0.190 \\
 & $\gamma_{k,2}$ & 1.800 & 1.617 & 1.959 & -0.723 & -0.818 & -0.635 \\
 & $\gamma_{k,3}$ & -1.667 & -1.879 & -1.380 & 0.508 & 0.297 & 0.779 \\
 & $\delta_0$ & -- & -- & -- & 0.124 & -0.024 & 0.328 \\
 & $\delta_1$ & -- & -- & -- & 0.286 & 0.211 & 0.339 \\
 & \multicolumn{7}{l}{\textit{Covariance parameters}} \\
 & $\zeta_k$ & 0.127 & 0.120 & 0.134 & 0.122 & 0.112 & 0.134 \\
 & $\rho_T$ & \multicolumn{6}{c}{0.717 (0.691, 0.750)} \\
\hline
\multirow{4}{*}{Antibody measurement $Y$}
 & $\mu_{k,0}$ & -5.613 & -5.885 & -5.312 & -6.061 & -6.602 & -5.535 \\
 & $\mu_{k,1}$ & 0.824 & 0.781 & 0.867 & 0.862 & 0.786 & 0.942 \\
 & $\sigma_{k,0}$ & 0.458 & 0.390 & 0.539 & 0.386 & 0.286 & 0.495 \\
 & $\sigma_{k,1}$ & 0.226 & 0.202 & 0.252 & 0.273 & 0.237 & 0.315 \\
\bottomrule
\end{tabular}
\end{table}

We assess the adequacy of the model using the graphical validation approach adapted from \citet{giorgiwallin2026} (Section~4.2), comparing
the observed age-stratified joint and marginal distributions of
log-AMA1 and log-MSP1 with the corresponding distributions
simulated from the fitted model, preserving the observed age
structure of the sample. The right panel of Figure~\ref{fig:ppc-nonspatial} shows the results of this comparison. The simulated predictive histograms adequately approximate the observed data within every age band, with small discrepancies observed for the age group $(5, 10]$ for both AMA1 and MSP1. Overall, this indicates that the fitted model provides a satisfactory representation of the observed age-dependent distributions of AMA1 and MSP1 concentrations.

Figure~\ref{fig:application-prediction} summarises the
model-based predictions obtained by projecting the fitted
spatial field to the survey locations and forward-simulating the
component indicators, latent seroreactivity, and antibody
concentration from the fitted parameters. Panel (a) shows a
clear spatially structured pattern in the predicted probability
of high seroreactivity, $p_k(\mathbf{x})$, for both antigens,
consistent with the estimated spatial scales
and the substantial cross-field correlation reported in
Table~\ref{tab:bootstrap_redhot}. This spatial structure is
visibly attenuated in panel (b), which shows the corresponding
predicted latent seroreactivity $t_{i,k}(\mathbf{x})$, as this introduces additional non-spatial variation due to individual-level biological factors. Panels (c) and (d) indicate that both
higher predicted $p_k$ and higher predicted $t_{i,k}$ correspond to
higher observed antibody concentrations, as expected. The two
panels differ, however, in the shape of this relationship. For
MSP1, the observed antibody distribution remains largely flat across the first three quintiles of $p_k$ and only shifts markedly between the fourth and fifth quintiles, whereas the corresponding increase across quintiles of $t_{i,k}$ is close to linear. 

\begin{figure}[hp]
\centering
\includegraphics[width=\textwidth,height=0.88\textheight,keepaspectratio]{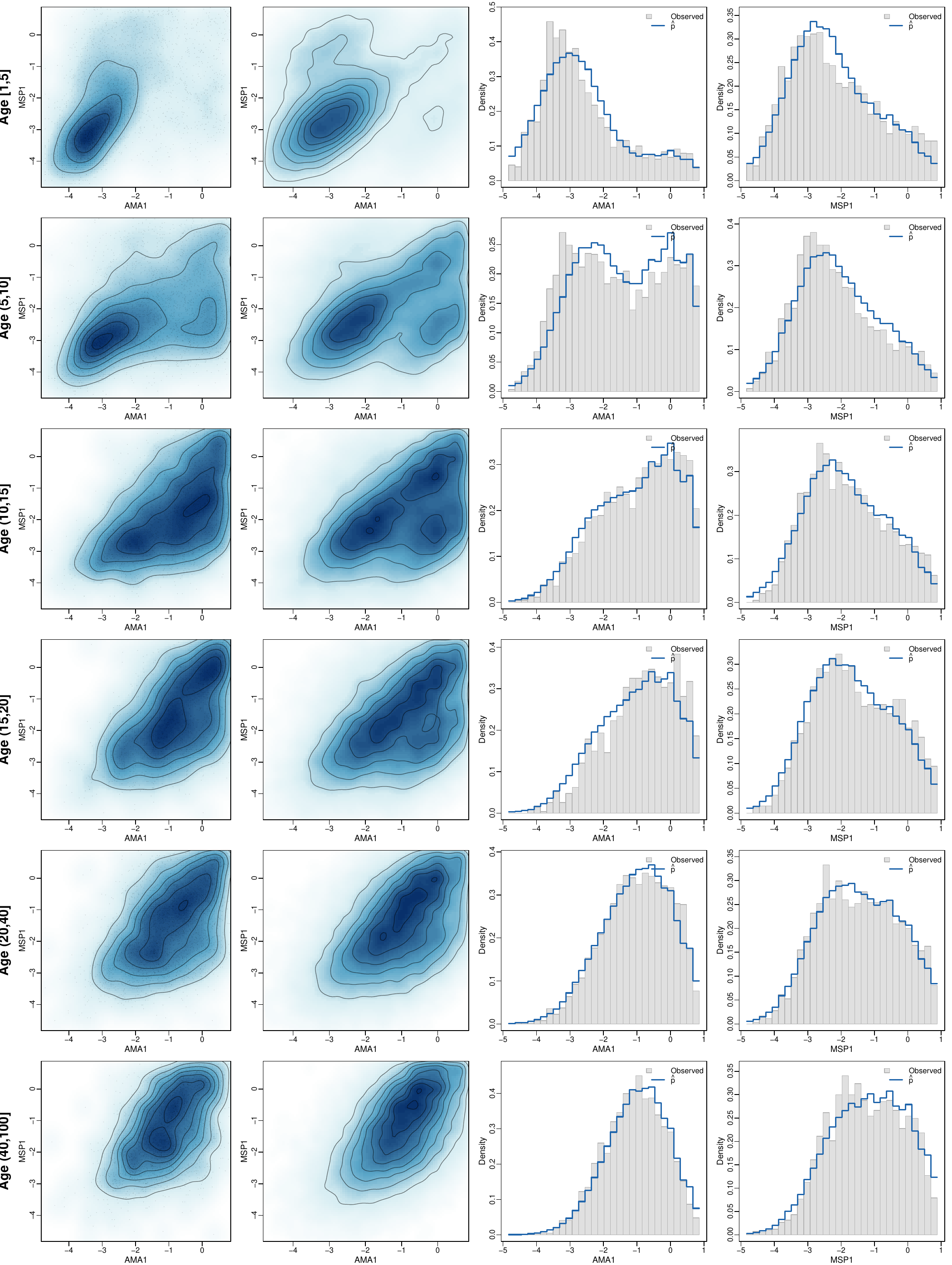}
\caption{Age-stratified predictive check for
the bivariate spatial model. Rows correspond to age
bands. The first column shows the observed joint
distribution of log-AMA1 and log-MSP1, the second shows
the simulated predictive joint density, and the two right
columns compare observed marginal histograms with
simulated marginal densities for AMA1 and MSP1.}
\label{fig:ppc-nonspatial}
\end{figure}

\begin{figure}[hp]
\centering
\includegraphics[width=\textwidth]{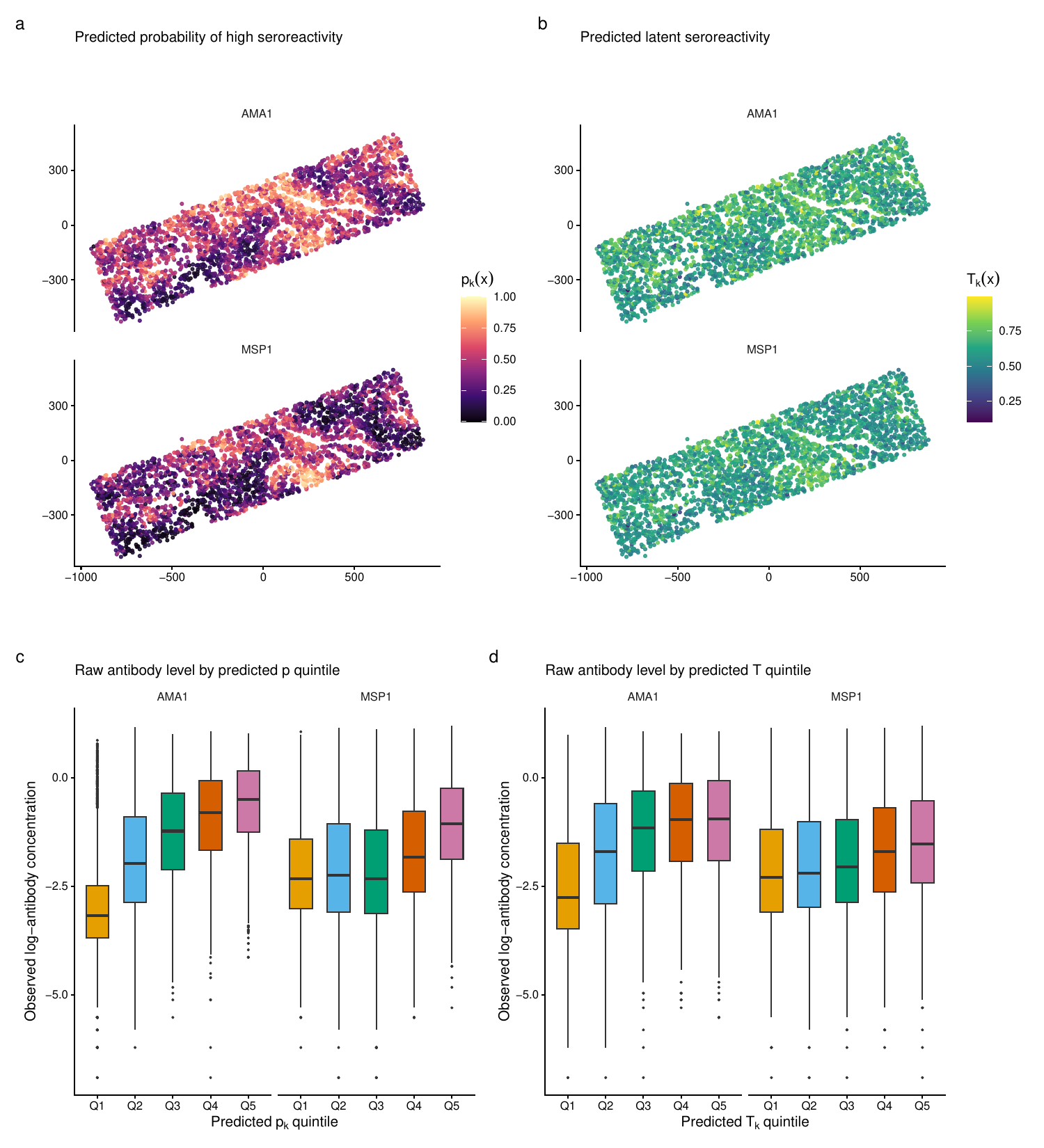}
\caption{Model-based predictions from the fitted joint bivariate spatial
model, obtained by drawing one sample from the Laplace-approximate
posterior of the spatial field, projecting it to the survey locations, and
forward-simulating the component indicators, latent seroreactivity, and
antibody concentration from the fitted parameters. Panel (a): predicted
probability of high seroreactivity, $p_k(\mathbf{x})$, averaged over
residents at each of the $3{,}042$ unique locations. Panel (b): predicted
latent seroreactivity, $t_{i,k}(\mathbf{x})$, similarly averaged by location.
Panels (c) and (d): boxplots of the raw observed log-antibody concentration
$y_k$ across quintiles of the predicted $p_k$ and $t_{i,k}$ respectively,
computed separately for each antigen; quintile labels denote only the
ordinal ranking (Q1 = lowest to Q5 = highest), not the underlying numeric
class boundaries. All panels are shown separately for AMA1 and MSP1.}
\label{fig:application-prediction}
\end{figure}

\subsection{Assessment of the recovery of the joint antibody distribution}
\label{sec:tv}

\begin{table}[ht!]
\centering
\footnotesize
\caption{Joint coarsened total variation distance $\widehat
d_{\mathrm{TV},\mathcal{C}}$ between the observed and model-predicted bivariate
distribution of log-AMA1 and log-MSP1, for the univariate and joint models
with and without the spatial field. All values use a common non-uniform
$10\times10$ AMA1--MSP1 grid pooled from pilot draws under all four fitted
models. The final row compares the full empirical distribution with
predictions at randomly sampled cohort covariates. Lower values indicate closer
agreement.}
\label{tab:tv_joint}
\begin{tabular}{lcccc}
\toprule
 & \multicolumn{2}{c}{\textbf{Non-spatial}} & \multicolumn{2}{c}{\textbf{Spatial}} \\
\cmidrule(lr){2-3} \cmidrule(lr){4-5}
\textbf{Age band} & Univ. & Joint & Univ. & Joint \\
\midrule
$[1,5]$     & 0.295 & 0.189 & 0.278 & 0.192 \\
$(5,10]$    & 0.204 & 0.129 & 0.186 & 0.137 \\
$(10,15]$   & 0.233 & 0.125 & 0.208 & 0.125 \\
$(15,20]$   & 0.254 & 0.131 & 0.234 & 0.133 \\
$(20,40]$   & 0.225 & 0.083 & 0.211 & 0.083 \\
$(40,100]$  & 0.231 & 0.106 & 0.226 & 0.105 \\
\addlinespace
All ages    & 0.194 & 0.078 & 0.175 & 0.079 \\
\bottomrule
\end{tabular}
\end{table}

\begin{table}[ht!]
\centering
\footnotesize
\caption{All-age marginal coarsened total variation distance $\widehat
d_{\mathrm{TV},\mathcal{C}}$ for each antigen separately, on a common $30$-cell grid
per antigen. }
\label{tab:tv_marginal}
\begin{tabular}{lrrrr}
\toprule
 & \multicolumn{2}{c}{\textbf{Non-spatial}} & \multicolumn{2}{c}{\textbf{Spatial}} \\
\cmidrule(lr){2-3} \cmidrule(lr){4-5}
\textbf{Antigen} & Joint & Univ. & Joint & Univ. \\
\midrule
AMA1 & 0.043 & 0.026 & 0.042 & 0.031 \\
MSP1 & 0.035 & 0.028 & 0.035 & 0.031 \\
\bottomrule
\end{tabular}
\end{table}

Here, we assess how well the fitted models reproduce the observed
bivariate distribution of $\mv{Y} := (Y_1,Y_2)^\top$. We compare four fitted
models, obtained by crossing two model choices: whether cross-antigen
dependence is included, giving the dependent model of
Sections~\ref{sec:model}--\ref{sec:spatial}, or univariate separate models with
$\rho_T=\rho_S=0$; and whether the spatial field is included, giving the
spatial or non-spatial specification. For each of the four resulting
models, we compare the observed and predicted distributions separately for
the joint pair $(Y_1,Y_2)$ and for each antigen's marginal, so that any
improvement can be attributed to the right source. As a summary
of the discrepancy between an observed and a predicted distribution we use
the total variation (TV) distance. For two absolutely continuous densities $f$ and
$g$, the TV distance is one half of their $L_1$ distance 
\citep{scheffe1947,gibbs2002},
\begin{equation}
\label{eq:tv}
d_{\mathrm{TV}}(f,g) := \frac{1}{2}\int |f(\mv{y})-g(\mv{y})|\,d\mv{y}
\;\in\;[0,1],
\end{equation}
which can be interpreted as the proportion of probability mass that is not shared by the two distributions.

A direct empirical analogue of \eqref{eq:tv} is degenerate for continuous
responses. Indeed, the empirical measure is supported on finitely many
observed points, so its total variation distance from an absolutely
continuous fitted distribution is one. We therefore compare the
distributions after coarsening the response space to a finite partition
$C_1,\ldots,C_J$. For densities $f$ and $g$, let
$p_j := \int_{C_j}f(\mv y)\,d\mv y$ and
$q_j := \int_{C_j}g(\mv y)\,d\mv y$ denote the corresponding cell
probabilities. The coarsened total variation distance satisfies
\[
    d_{\mathrm{TV},\mathcal C}(f,g)
    :=
    \frac{1}{2}\sum_{j=1}^{J}|p_j-q_j|
    \leq d_{\mathrm{TV}}(f,g),
\]
since coarsening cannot increase total variation \citep{csiszar1967}.
Thus, it gives a partition-dependent lower bound on \eqref{eq:tv}.

Since the true data-generating distribution $f$ is unknown, we replace
$p_j$ by its empirical estimate
$\widehat p_j := n^{-1}\sum_{i=1}^n\mathbb{I}(\mv y_i\in C_j)$.
Similarly, we estimate $q_j$ by the corresponding cell frequency
$\widehat q_j$ from $M_{\mathrm{TV}}$ Monte Carlo draws from the fitted
model. Our reported statistic is therefore
\begin{equation*}
    \widehat d_{\mathrm{TV},\mathcal C}
    :=
    \frac{1}{2}\sum_{j=1}^{J}
    \bigl|\widehat p_j-\widehat q_j\bigr|.
\end{equation*}

We use the same partition $\mathcal C := \{C_1,\ldots,C_J\}$ for all models
within each comparison, with cell boundaries determined by fitted-model
quantiles pooled across the four models. This gives finer resolution in
regions of high predictive density, while the outer cells extend to
$\pm\infty$. For the joint comparison, $\mathcal C$ is a $10\times10$
product grid, whereas each marginal comparison uses 30 cells per antigen.

For age-specific comparisons, we simulate one response for each individual
in the corresponding age band, conditional on that individual's age and,
for the spatial models, fitted spatial field value. For the comparison
across all ages, each Monte Carlo draw is obtained by sampling an individual
from the cohort with replacement and simulating a response conditional on
that individual's age and covariates (if used). Thus, the simulated sample reflects
the empirical age and covariate distribution of the cohort, and
$\widehat q_j$ is obtained as the resulting proportion of simulated
responses falling in $C_j$.

Table~\ref{tab:tv_joint} reports $\widehat d_{\mathrm{TV},\mathcal{C}}$
for the joint pair $(Y_1,Y_2)$. In every age band, and under both the
spatial and non-spatial specifications, this distance is substantially
smaller for the joint model than for the univariate one. For example,
across all ages it falls from $0.194$ to $0.078$ without the spatial field,
and from $0.175$ to $0.079$ with it, differences an order of magnitude
larger than the Monte Carlo standard errors. Recalling that
$\widehat d_{\mathrm{TV}}$ is the proportion of probability mass placed in
the wrong grid cell, this means that around $19\%$ of the joint
distribution is misallocated by the univariate model, against around $8\%$
for the joint model. Table~\ref{tab:tv_marginal} shows that the univariate models are, however, slightly better at capturing each antigen's marginal distribution than the joint model, which is expected since they are optimised to capture the marginal distributional shape only. 

These results indicate that the joint model yields a substantial improvement in recovering the bivariate distribution of the antibody outcomes. This gain comes at the cost of a slightly poorer fit to the marginal distributions compared with univariate models that ignore dependence.

\subsection{Assessing predictive performance via cross-validation}
\label{sec:cv_comparison}

We now examine predictive performance for the same four models introduced in
Section~\ref{sec:tv}. Let $\mv{x}_1,\ldots,\mv{x}_{n_{L}}$ denote the unique locations of
Section~\ref{sec:application}. We partition these locations at random into
$V=5$ folds of approximately equal size,
$\mathcal{L}_1,\ldots,\mathcal{L}_V$, and hold out one fold at a time.
Individuals residing at a location in $\mathcal{L}_v$ form the test set for
fold $v$ and all remaining individuals form the training set. Within each fold the
regression, mixture and spatial covariance parameters are held fixed at the
estimates obtained once from the complete data, and only the latent field mode $\widehat{\mv{w}}_{-v}$ and exact Hessian
$\mv{H}_{\widehat\vartheta,-v}(\widehat{\mv{w}}_{-v})$ are recomputed
from the training locations. Hence,
after using all available data to recover the model parameters, cross-validation
is used only to assess the predictive contribution of the stochastic
components.

Predictive accuracy is measured by three proper scoring rules
\citep{gneiting2007strictly}. For antigen $k$ with marginal predictive CDF
$F_k$, the continuous ranked probability score is
\begin{equation}
\label{eq:crps}
\mathrm{CRPS}(F_k,y_k) := \int_{-\infty}^{\infty}
\bigl\{F_k(u) - \mathbb{I}(u\geq y_k)\bigr\}^2\, du = \E|Y_k-y_k|
   - \frac{1}{2}\E|Y_k-Y_k'|,
\end{equation}
where $Y_k,Y_k'\stackrel{\mathrm{iid}}{\sim}F_k$. The rule used by \citet{gneiting2010matern} to compare independent against
bivariate Mat\'ern co-kriging, so our comparison is directly analogous. Since, the
CRPS is not locally scale invariant and gives more weight to observations
whose predictive distribution has large uncertainty
\citep{bolinwallin2023}, we also report the scaled
CRPS,
\begin{equation}
\label{eq:scrps}
\mathrm{SCRPS}(F_k,y_k) :=
\frac{\E|Y_k-y_k|}{\E|Y_k-Y_k'|}
+ \tfrac{1}{2}\log \E|Y_k-Y_k'|,
\end{equation}
where $Y_k,Y_k'\stackrel{\mathrm{iid}}{\sim}F_k$, which is locally scale
invariant and negatively oriented as in~\eqref{eq:crps}. Both
\eqref{eq:crps} and \eqref{eq:scrps} are univariate and reward cross-antigen
dependence only indirectly, through the marginals. We therefore also report
the energy score, the multivariate generalisation of~\eqref{eq:crps},
\begin{equation}
\label{eq:es}
\mathrm{ES}(F,\mv{y}) := \E\|\mv{Y}-\mv{y}\|
- \tfrac{1}{2}\E\|\mv{Y}-\mv{Y}'\|,
\end{equation}
where $\mv{Y},\mv{Y}'\stackrel{\mathrm{iid}}{\sim}F$, the bivariate
predictive distribution of the pair, and $\|\cdot\|$ is the Euclidean norm.
Unlike \eqref{eq:crps} and \eqref{eq:scrps}, equation~\eqref{eq:es} is
minimised in expectation only by the correctly specified joint distribution.
For all three scores, lower values indicate better predictive performance.
All are computed directly from the Monte Carlo draws using their empirical
estimators. For the second expectation in the energy score,
each predictive draw is paired cyclically with the following independent draw.
This gives an unbiased estimator of $\E\|\mv{Y}-\mv{Y}'\|$ whose computational
cost is linear, rather than quadratic, in the number of predictive draws.

We complement the scores with two calibration and sharpness summaries. The
first is the empirical coverage and mean width of equal-tailed $95\%$
prediction intervals for each antigen. The second takes into account the joint
predictive distribution explicitly. For a held-out individual with predictive density
$f_i$ and target probability $p=0.95$, the highest-density predictive region
is
\begin{equation}
\label{eq:hdr}
R_{i,p} := \{\mv{y}: f_i(\mv{y}) \geq c_{i,p}\},
\end{equation}
where $c_{i,p}$ is the largest threshold satisfying
$\Pr(\mv{Y}_i\in R_{i,p})\geq p$ \citep{hyndman1996}. Among all regions with
the required probability content, \eqref{eq:hdr} has minimum area, and it
may be disconnected when the predictive distribution is multimodal. At
comparable coverage, the smaller region is the sharper prediction. Because the
predictive distribution has no closed form, $f_i$ is estimated by a
diagonal-bandwidth Gaussian kernel density estimate \citep{silverman1986}.
For the reported HDR comparisons, we use $M_{\mathrm{CV}}=3000$
paired predictive draws per individual, split equally between estimation of
the kernel density and calibration of the threshold $c_{i,p}$. The estimated density is evaluated
on an $80\times80$ midpoint grid, and the area of $R_{i,p}$ is the total area
of grid cells whose estimated density exceeds $c_{i,p}$. We average the scores
at the location level for each model.

\begin{table}[ht!]
\centering
\footnotesize
\caption{Parameter-fixed five-fold spatial cross-validation results for the malaria data, based on $15{,}578$ individuals at $3{,}042$ unique locations. Results are reported for four models: joint spatial, univariate spatial, joint non-spatial, and univariate non-spatial. Metrics reported are the continuous ranked probability score (CRPS), scaled CRPS (SCRPS), energy score (ES), marginal $95\%$ prediction interval coverage and mean width for each antigen, and bivariate $95\%$ highest-density region (HDR) coverage, mean area, and median area for the antibody pair. Scoring rules and prediction intervals are computed using $M_{\mathrm{CV}}=3000$ predictive draws per individual.}
\label{tab:cv_redhot}
\begin{tabular}{llrrrr}
\toprule
 & & \multicolumn{2}{c}{\textbf{Spatial}} & \multicolumn{2}{c}{\textbf{Non-spatial}} \\
\cmidrule(lr){3-4} \cmidrule(lr){5-6}
\textbf{Metric} & \textbf{Outcome} & Joint & Univariate & Joint & Univariate \\
\midrule
\multicolumn{6}{l}{\textit{Proper scoring rules}} \\
CRPS  & AMA1 & 0.625 & 0.613 & 0.661 & 0.653 \\
CRPS  & MSP1 & 0.677 & 0.674 & 0.715 & 0.711 \\
SCRPS & AMA1 & 1.106 & 1.098 & 1.134 & 1.127 \\
SCRPS & MSP1 & 1.150 & 1.148 & 1.178 & 1.175 \\
ES    & Pair & 1.009 & 1.007 & 1.064 & 1.067 \\
\addlinespace
\multicolumn{6}{l}{\textit{Marginal $95\%$ prediction intervals}} \\
Coverage (\%) & AMA1 & 95.603 & 96.020 & 95.031 & 95.609 \\
Coverage (\%) & MSP1 & 94.903 & 95.442 & 94.935 & 95.218 \\
Mean width    & AMA1 & 4.204 & 4.273 & 4.260 & 4.356 \\
Mean width    & MSP1 & 4.490 & 4.529 & 4.587 & 4.631 \\
\addlinespace
\multicolumn{6}{l}{\textit{Bivariate $95\%$ highest-density region}} \\
Coverage (\%) & Pair & 95.166 & 95.019 & 95.121 & 94.800 \\
Mean area     & Pair & 19.734 & 22.579 & 20.323 & 23.442 \\
Median area   & Pair & 19.286 & 21.948 & 20.112 & 22.950 \\
\bottomrule
\end{tabular}
\end{table}

Table~\ref{tab:cv_redhot} reports the cross-validation results. The clearest
gain in predictive accuracy comes from introducing the spatial field. For
both the joint and the univariate specifications, the spatial models have
lower energy scores than their non-spatial counterparts, with paired
differences of $-0.055$ and $-0.060$, respectively. The associated standard
errors indicate that these differences are unlikely to arise from sampling
variability alone. For example, the corresponding standard errors for the differences reported in Table~\ref{tab:cv_redhot} are, in fact, 0.0026 and 0.0029 (these standard errors are computed by treating unique locations as the clustering units and assuming independence across locations). The spatial models also
consistently have lower CRPS and SCRPS for both antigens. These results
indicate a consistent gain in predictive accuracy from accounting for
residual spatial variation in seroreactivity when predicting at unsampled locations.

The main advantage of the bivariate model is instead seen in the joint predictive regions. The joint spatial
model attains $95.17\%$ coverage with a mean HDR area of $19.73$, whereas the
separate univariate spatial models attain $95.02\%$ coverage with a mean area
of $22.58$. The paired mean-area difference is $-2.85$, corresponding to a
reduction of $12.60\%$ at essentially the same coverage. The corresponding
non-spatial comparison gives $95.12\%$ coverage and a mean area of $20.32$
for the joint model, against $94.80\%$ coverage and a mean area of $23.44$
for the separate univariate models. The paired mean-area difference is
$-3.12$, corresponding to a reduction of $13.31\%$. While the spatial field
provides the main gain in predictive accuracy, modelling the two antigens
jointly provides a substantial gain in the sharpness of the bivariate
predictive distribution. The joint models also yield slightly narrower marginal
prediction intervals while retaining coverage close to the nominal level.

\section{Simulation study}
\label{sec:simulation}

The objective of this simulation study is to quantify the benefit of jointly
modelling both the cross-antigen dependence in the latent seroreactivity
process and the spatial cross-correlation between antigen-specific random
fields, relative to fitting a separate spatial model to each antigen that
ignores both sources of dependence. We focus on the performance of predictive
inferences for three quantities of primary interest: the spatial random field
$S$, the latent seroreactivity process $T$ and the observed antibody
outcome $Y$.

Using the data described in Section~\ref{sec:application}, we simulate data
over a randomly selected subset of $n_L$ survey locations, using
the maximum likelihood estimates from the joint model of
Section~\ref{sec:application} as the true parameter values. We simulate five individuals per location,
approximating the average number of individuals per location in the
application of Section~\ref{sec:application}. Two correlation parameters are
varied over a $3 \times 3$ grid of values: the spatial cross-field coupling
$\rho_S \in \{0.1, 0.5, 0.9\}$, and the within-component latent correlation
$\rho_T \in \{0.1, 0.5, 0.9\}$, with all other parameters, including the
unequal scales $\omega_1$ and $\omega_2$, held at their fitted values. We
additionally vary the amount of measurement noise in the observation model
through a variance ratio $c \in \{1, 0.5, 0.25, 0.1\}$, which multiplies the
fitted conditional variances of $Y$ given $T$, so that $c=1$ reproduces the
noise level estimated in the application and smaller values correspond to a
more precise assay. 
For each of the resulting 36 scenarios, data are
simulated at three values for the number of sampled locations,
$n_L \in \{300, 600, 1200\}$.

For every simulated dataset we fit three models: the joint bivariate
spatial model of Section~\ref{sec:spatial}, and two separate univariate
spatial models, one per antigen, which ignore the cross-antibody
association and treat each spatial field as independent. From each fitted
model we predict the spatial field at every sampled location, and use these
predictions together to forward-simulate and predict the corresponding latent
seroreactivity and antibody concentration. 
This yields, for every sampled location $\mathbf{x}_i$,
$i=1,\ldots,n_L$, antigen $k=1,2$, and fitted model,
predicted values $\widehat S_k(\mathbf{x}_i)$, $\widehat t_{i,k}(\mathbf{x}_i)$
and $\widehat Y_k(\mathbf{x}_i)$ that can be compared directly with the
corresponding simulated true values $S_k(\mathbf{x}_i)$,
$t_{i,k}(\mathbf{x}_i)$ and $Y_k(\mathbf{x}_i)$.
Consistent with the population-level targets of
Section~\ref{sec:predictive-targets}, these are predictions for
generic individuals of a given age at sampled locations,
rather than subject-specific conditional estimates on the
corresponding antibody outcomes.

Predictive performance for $S$ is summarised, pooling both antigens
together, by the median bias and median absolute error,
\begin{align*}
\text{MB}(S) &:= \operatorname{median}_{k,i}
\left\{\widehat S_k(\mathbf{x}_i) - S_k(\mathbf{x}_i)\right\}, \\
\text{MAE}(S) &:= \operatorname{median}_{k,i}
\left|\widehat S_k(\mathbf{x}_i) - S_k(\mathbf{x}_i)\right|,
\end{align*}
computed within each $(n_L, c, \rho_S, \rho_T)$ scenario over all replicates.
The same expressions, with $S$ replaced by $T$ and by $Y$, are used to
summarise predictive performance for the latent seroreactivity and the
observed antibody concentration. 

\subsection{Results}
\begin{table}[ht!]
\centering
\footnotesize
\caption{Median bias (MB) and median absolute error (MAE) of $S$, $T$ and $Y$ (pooled across both antigens), comparing the separate univariate models (Univ.) and the joint bivariate model (Joint), across the 9 $(\rho_S,\rho_T)$ scenarios and the 4 variance-ratio values $c$, sample size $n_{L} = 300$.}
\label{tab:bias_mae_redhot_n300_wide}
\resizebox{\textwidth}{!}{%
\begin{tabular}{lllrrrrrrrrrrrr}
\toprule
$c$ & $\rho_S$ & $\rho_T$ & \multicolumn{4}{c}{$S$} & \multicolumn{4}{c}{$T$} & \multicolumn{4}{c}{$Y$} \\
\cmidrule(lr){4-7} \cmidrule(lr){8-11} \cmidrule(lr){12-15}
 & & & \multicolumn{2}{c}{MB} & \multicolumn{2}{c}{MAE} & \multicolumn{2}{c}{MB} & \multicolumn{2}{c}{MAE} & \multicolumn{2}{c}{MB} & \multicolumn{2}{c}{MAE} \\
\cmidrule(lr){4-5} \cmidrule(lr){6-7} \cmidrule(lr){8-9} \cmidrule(lr){10-11} \cmidrule(lr){12-13} \cmidrule(lr){14-15}
 & & & Univ. & Joint & Univ. & Joint & Univ. & Joint & Univ. & Joint & Univ. & Joint & Univ. & Joint \\
\midrule
\multirow{9}{*}{1}  & 0.1 & 0.1 & 0.002 & -0.004 & 0.754 & 0.729 & -0.091 & 0.006 & 0.213 & 0.160 & 0.013 & -0.001 & 1.106 & 1.100 \\
 & 0.1 & 0.5 & -0.000 & -0.001 & 0.750 & 0.731 & -0.090 & 0.004 & 0.212 & 0.160 & 0.014 & 0.006 & 1.099 & 1.095 \\
 & 0.1 & 0.9 & -0.000 & 0.001 & 0.751 & 0.712 & -0.094 & -0.004 & 0.213 & 0.157 & 0.013 & 0.003 & 1.089 & 1.088 \\
 & 0.5 & 0.1 & -0.011 & -0.014 & 0.745 & 0.711 & -0.092 & 0.009 & 0.214 & 0.159 & 0.012 & 0.007 & 1.101 & 1.092 \\
 & 0.5 & 0.5 & -0.010 & -0.020 & 0.745 & 0.716 & -0.091 & 0.005 & 0.212 & 0.159 & 0.013 & 0.005 & 1.095 & 1.088 \\
 & 0.5 & 0.9 & -0.000 & -0.018 & 0.740 & 0.702 & -0.091 & -0.004 & 0.213 & 0.156 & 0.015 & -0.001 & 1.089 & 1.082 \\
 & 0.9 & 0.1 & 0.002 & -0.025 & 0.730 & 0.668 & -0.093 & 0.006 & 0.213 & 0.159 & 0.015 & 0.006 & 1.096 & 1.087 \\
 & 0.9 & 0.5 & -0.019 & -0.042 & 0.731 & 0.686 & -0.088 & 0.005 & 0.212 & 0.158 & 0.017 & 0.005 & 1.093 & 1.084 \\
 & 0.9 & 0.9 & -0.009 & -0.017 & 0.731 & 0.671 & -0.094 & -0.005 & 0.213 & 0.155 & 0.008 & 0.001 & 1.087 & 1.076 \\
\midrule
\multirow{9}{*}{0.5}  & 0.1 & 0.1 & -0.012 & -0.019 & 0.748 & 0.733 & -0.094 & 0.008 & 0.212 & 0.160 & 0.015 & 0.002 & 1.084 & 1.078 \\
 & 0.1 & 0.5 & 0.002 & -0.003 & 0.746 & 0.731 & -0.089 & 0.006 & 0.210 & 0.158 & 0.011 & 0.006 & 1.079 & 1.072 \\
 & 0.1 & 0.9 & -0.006 & -0.011 & 0.742 & 0.702 & -0.096 & -0.003 & 0.215 & 0.156 & 0.012 & 0.000 & 1.071 & 1.065 \\
 & 0.5 & 0.1 & -0.003 & -0.013 & 0.739 & 0.708 & -0.086 & 0.007 & 0.208 & 0.159 & 0.018 & 0.006 & 1.084 & 1.071 \\
 & 0.5 & 0.5 & -0.011 & -0.012 & 0.741 & 0.711 & -0.089 & 0.006 & 0.210 & 0.159 & 0.013 & 0.006 & 1.076 & 1.073 \\
 & 0.5 & 0.9 & -0.017 & -0.023 & 0.730 & 0.693 & -0.089 & -0.002 & 0.209 & 0.155 & 0.013 & -0.001 & 1.065 & 1.061 \\
 & 0.9 & 0.1 & -0.018 & -0.029 & 0.727 & 0.669 & -0.093 & 0.006 & 0.210 & 0.158 & 0.015 & 0.005 & 1.080 & 1.067 \\
 & 0.9 & 0.5 & -0.006 & -0.021 & 0.727 & 0.677 & -0.087 & 0.006 & 0.208 & 0.157 & 0.014 & 0.006 & 1.071 & 1.064 \\
 & 0.9 & 0.9 & -0.008 & -0.020 & 0.720 & 0.662 & -0.090 & -0.002 & 0.209 & 0.155 & 0.015 & 0.002 & 1.062 & 1.058 \\
\midrule
\multirow{9}{*}{0.25}  & 0.1 & 0.1 & -0.014 & -0.018 & 0.742 & 0.728 & -0.095 & 0.006 & 0.211 & 0.160 & 0.013 & 0.004 & 1.075 & 1.070 \\
& 0.1 & 0.5 & -0.009 & -0.005 & 0.750 & 0.723 & -0.087 & 0.008 & 0.208 & 0.159 & 0.012 & 0.006 & 1.070 & 1.064 \\
& 0.1 & 0.9 & -0.010 & -0.014 & 0.735 & 0.698 & -0.092 & -0.002 & 0.209 & 0.156 & 0.013 & -0.001 & 1.059 & 1.055 \\
& 0.5 & 0.1 & -0.003 & -0.013 & 0.738 & 0.701 & -0.091 & 0.009 & 0.209 & 0.158 & 0.015 & 0.007 & 1.072 & 1.060 \\
& 0.5 & 0.5 & -0.006 & -0.015 & 0.737 & 0.707 & -0.091 & 0.005 & 0.208 & 0.158 & 0.012 & 0.008 & 1.065 & 1.056 \\
& 0.5 & 0.9 & -0.010 & -0.017 & 0.732 & 0.693 & -0.089 & -0.002 & 0.208 & 0.155 & 0.013 & -0.000 & 1.056 & 1.050 \\
& 0.9 & 0.1 & 0.003 & -0.022 & 0.722 & 0.658 & -0.086 & 0.007 & 0.206 & 0.158 & 0.015 & 0.004 & 1.068 & 1.057 \\
& 0.9 & 0.5 & -0.014 & -0.026 & 0.724 & 0.669 & -0.091 & 0.006 & 0.207 & 0.157 & 0.012 & 0.008 & 1.065 & 1.054 \\
& 0.9 & 0.9 & -0.028 & -0.028 & 0.720 & 0.660 & -0.088 & -0.002 & 0.206 & 0.154 & 0.009 & 0.003 & 1.052 & 1.044 \\
\midrule
\multirow{9}{*}{0.1} & 0.1 & 0.1 & -0.012 & -0.018 & 0.744 & 0.728 & -0.089 & 0.007 & 0.206 & 0.160 & 0.011 & 0.002 & 1.066 & 1.061 \\
& 0.1 & 0.5 & -0.010 & -0.015 & 0.744 & 0.723 & -0.089 & 0.006 & 0.206 & 0.158 & 0.013 & 0.006 & 1.061 & 1.054 \\
& 0.1 & 0.9 & -0.011 & -0.010 & 0.739 & 0.693 & -0.092 & -0.001 & 0.208 & 0.155 & 0.012 & 0.002 & 1.055 & 1.048 \\
& 0.5 & 0.1 & -0.010 & -0.027 & 0.736 & 0.707 & -0.088 & 0.007 & 0.206 & 0.159 & 0.014 & 0.004 & 1.064 & 1.054 \\
& 0.5 & 0.5 & -0.005 & -0.009 & 0.736 & 0.706 & -0.089 & 0.005 & 0.206 & 0.158 & 0.015 & 0.006 & 1.057 & 1.050 \\
& 0.5 & 0.9 & -0.013 & -0.014 & 0.732 & 0.684 & -0.088 & -0.001 & 0.205 & 0.155 & 0.011 & 0.001 & 1.048 & 1.040 \\
& 0.9 & 0.1 & -0.008 & -0.025 & 0.723 & 0.660 & -0.088 & 0.005 & 0.205 & 0.158 & 0.011 & 0.004 & 1.059 & 1.049 \\
& 0.9 & 0.5 & -0.022 & -0.031 & 0.722 & 0.672 & -0.088 & 0.006 & 0.206 & 0.158 & 0.014 & 0.009 & 1.058 & 1.051 \\
& 0.9 & 0.9 & -0.015 & -0.027 & 0.724 & 0.657 & -0.091 & -0.000 & 0.207 & 0.154 & 0.012 & 0.004 & 1.048 & 1.037 \\
\bottomrule
\end{tabular}
}
\end{table}

Table~\ref{tab:bias_mae_redhot_n300_wide} reports the median bias and median
absolute error for the separate univariate models (Univ.) and the joint
bivariate model (Joint) at $n_L = 300$; the corresponding
results for $n_L = 600$ and $n_L = 1200$ are given
in Appendix~\ref{app:sim_results}. We find that the pattern of results differs across the three quantities considered, indicating where the benefit of joint modelling accrues most.

For the spatial field $S$, the joint model attains a lower median absolute
error in every scenario, and the size of this advantage increases with the cross-field correlation $\rho_S$.
At $n_L = 300$ and $c = 1$, the MAE falls from
$0.754$ to $0.729$ when $\rho_S = 0.1$, a reduction of around 3\%, but from
$0.730$ to $0.668$ when $\rho_S = 0.9$, a reduction of around 9\%. This
relative advantage is stable across sample sizes, while the absolute error
falls for both models as $n_{L}$ increases. The joint model's ability to borrow
spatial information across antigens therefore yields a genuine efficiency
gain whose magnitude is governed by how much exploitable cross-field signal
is present, rather than by the amount of data available.

For the latent seroreactivity process $T$, the univariate models exhibit a
negative median bias of approximately $-0.09$ that is essentially constant
across every combination of $\rho_S$, $\rho_T$, $c$ and $n_{L}$, while the joint
model's bias remains close to zero throughout.  The MAE for $T$ is approximately 20--25\% lower under the joint model at
every scenario and every sample size considered. This consistent finding
suggests that the principal benefit of jointly modelling the two antigens
lies in more accurate recovery of each individual's underlying immune activation state.

For the observed antibody outcome $Y$, both models are close to unbiased and
the joint model retains a consistent but small advantage in MAE. Reducing the
measurement-error variance through $c$ lowers the absolute error for both
models, as expected, but leaves the difference between them largely unchanged:
at $n_{L} = 300$ and $\rho_S = \rho_T = 0.9$, the gap in MAE is $0.011$ at $c = 1$
and $0.011$ at $c = 0.1$. The improvement in the recovery of $T$ therefore
propagates only weakly to $Y$, and this attenuation is not 
attributable to the magnitude of the measurement error alone.

\section{Discussion}
\label{sec:discussion}
We have introduced a modelling framework for the joint analysis of
antibody responses to multiple antigens, based on three guiding
principles that determine how different sources of dependence should be accounted for in the model. More specifically, the proposed modelling framework accounts for
three sources of dependence: dependence induced by a shared
infection event, which stimulates responses to several antigens
simultaneously within an individual; dependence induced by a
shared exposure environment, which correlates infection risk
between individuals sampled at nearby locations and, through this
shared risk, across the antigens to which they have been exposed;
and dependence arising from individual-specific immunological
characteristics that shape how a person's immune system responds
once exposed. To achieve this, we have focused primarily on the
bivariate case, extending the univariate latent variable framework
of \citet{giorgiwallin2026} to a joint distribution for a pair of
antigen-specific continuous seroreactivity processes. Spatial
dependence between the two processes is introduced through a
bivariate Mat\'ern random field acting on the mixing structure of
this joint distribution. The guiding principles set out in
Section~\ref{sec:framework} also provide a general basis for
extending the framework to settings with more than two antibody
responses, while guarding against excessive model complexity and
keeping the resulting structure interpretable.

The results of Sections~\ref{sec:tv}--\ref{sec:simulation} indicate that the
inferential benefits accrued by the joint model depend on which feature of
the antibody distribution is of interest. In the application, the joint model
gave a markedly more reliable and less uncertain approximation of the
bivariate antibody distribution than the two univariate models, whereas the
two performed comparably on recovering each antibody's marginal distribution.
Cross-validated predictive comparisons showed larger differences in the scoring rules
 between spatial and non-spatial models than between joint and separate univariate specifications. 
 The benefit of joint rather than separate univariate modelling was instead most evident in the bivariate 
 highest-density predictive regions: at essentially the same coverage, the
 mean area was $12.60\%$ smaller under the joint spatial model. The marked
 reduction in HDR area contrasts with the very small difference in energy
 score between the joint and separate models. Although the energy score is
 proper for the joint distribution, this contrast indicates limited
 sensitivity to dependence differences in this application and motivates
 complementing it with scoring rules that target the dependence structure
 more directly.
Joint modelling is
therefore of clear benefit when the target of inference is a joint property
of the antibody distributions, such as cross-antigen dependence itself or the
probability that an individual is highly seroreactive to both antigens
simultaneously. When interest lies solely in the marginal distributions of the antibody responses, fitting a joint model brings little gain, and its added complexity may even inflate the uncertainty of the marginal parameter estimates. The simulation study allowed to  locate the
source of this benefit more precisely. The largest and most consistent gains
are in the recovery of the latent seroreactivity process $T$, where the
univariate models exhibit a more substantial bias than a joint model. Improvements in the recovery of the spatial fields are more
modest and scale with the strength of the cross-field correlation, and they
propagate only weakly to the observed antibody concentrations, which are
separated from $T$ by the measurement error of the observation model. The
case for joint modelling is thus strongest when the latent seroreactivity
process, rather than the observed concentration, is the quantity of
epidemiological interest.

Although we have illustrated the framework for the bivariate case only, the
underlying construction extends naturally to more than two antigens. For
example, it could be used to address the joint modelling of multiplex
antibody responses considered by \citet{baudemont2026}, who extend
serocatalytic models to combinations of up to three \textit{P.\ falciparum}
antigens using a serological incidence shared across antigens, together with
antigen-specific probabilities of seroconversion given exposure and
antigen-specific seroreversion rates. Cross-antigen dependence is modelled
through the shared exposure process, since a single exposure event may
seroconvert both antigens at once, with probability given by the product of
their seroconversion probabilities. In terms of the principles of
Section~\ref{sec:framework}, this construction respects \textbf{P1}, in that
all dependence is placed in the serostatus process, but does not respect
\textbf{P2}, since dichotomisation removes response magnitude from the model
altogether, and addresses only the shared-infection-event mechanism of
\textbf{P3}.

Within our framework, the specification of \citet{baudemont2026} could be
extended by using their shared incidence parameter to define a mechanistic
form for the mixing probabilities $\pi_i(\mv{z})$, following
\citet{giorgiwallin2026}. Let $\lambda(\mv{x}_i)$
denote a spatially-varying exposure intensity, with
\begin{equation*}
\log\{\lambda(\mv{x}_i)\} := \mv{d}_i^\top\beta + S(\mv{x}_i),
\end{equation*}
where $\beta$ collects covariate effects analogous to those already entering
$\mv{d}_i$ in Sections~\ref{sec:model} and~\ref{sec:spatial}, and
$S(\mv{x}_i)$ is a spatial random field of the kind introduced in
Section~\ref{sec:spatial}; let $\gamma_k \in (0,1)$ denote the probability
that an exposure event moves an individual into the high-seroreactivity
regime for antigen $k$; and let $\varrho_k > 0$ denote the rate of reversion
to the low-seroreactivity regime. The marginal mixing probability for antigen
$k$ then becomes
\begin{equation*}
p_{i,k} := \frac{\lambda(\mv{x}_i)\gamma_k}{\lambda(\mv{x}_i)\gamma_k+\varrho_k}
\left[1-\exp\{-(\lambda(\mv{x}_i)\gamma_k+\varrho_k)a_i\}\right],
\end{equation*}
which has the same reversible-catalytic form as the age-specific solution of
\citet{baudemont2026} under constant transmission, now made spatially
explicit through $\lambda(\mv{x}_i)$. Conditional probabilities for the
second and third antigens are obtained by adding age-dependent association
terms $\delta_{12}(a_i)$, $\delta_{13}(a_i)$ and $\delta_{23}(a_i)$ to the
logit of this expression, exactly as $\delta(a_i)$ is added
in~\eqref{eq:mixing_probs}. All three principles are then respected, since the mechanistic specification
enters only the mixing probabilities, leaving the observation model unchanged
(\textbf{P1}) and confining exposure to $\lambda(\mv{x}_i)$ and $\gamma_k$
while response magnitude remains governed by the means of the mixture's components
(\textbf{P2}). The three mechanisms of \textbf{P3} are represented, respectively, by the
$\delta_{jk}(a_i)$, by the shared $S(\mv{x}_i)$, and by $\rho_T$. This reformulation rests, however, on a single
shared exposure intensity and a single spatial field $S(\mv{x})$ common to
all antigens, rather than the antigen-specific fields $S_1(\mv{x})$ and
$S_2(\mv{x})$ of Section~\ref{sec:spatial}. In contexts where this assumption
is biologically plausible, it could provide the basis for a more parsimonious
and interpretable class of models.

The extensions illustrated above are relatively straightforward when
moving from two to several more, say up to five, antigens from the
same pathogen, as in the mechanistic reformulation discussed above.
More general extensions, however, whether to a larger number of
antigens or to antigens spanning multiple distinct pathogens, raise
epidemiological and statistical challenges that we have not addressed
here. The
bivariate model developed in this paper already illustrates the
substantial complexity involved in jointly representing latent
seroreactivity and its spatial structure, and this complexity is only
likely to increase further as more antigens are added. A general
multivariate formulation would need to be tailored to the specific
epidemiological context, and would require careful consideration of
how best to incorporate biological knowledge into the model
formulation. In particular, whether the antigens under consideration
belong to a single pathogen or to multiple distinct pathogens has
direct implications for how cross-antigen association should be
modelled. To further extend the proposed bivariate framework to a
multivariate context, two distinct approaches should be considered.
The first is a disease-agnostic approach in which no prior structure
is imposed on the correlation structure to quantify the association
among antibody responses, allowing the data to determine the full
pattern of dependence. Alternatively, an assumption-driven approach
would instead impose a parsimonious correlation structure informed by biological knowledge, for example by allowing association only among antigens sharing a common pathogen or transmission route. In
addition, the computational approximations used here, while adequate
for a bivariate outcome, are unlikely to scale up to settings with
tens of antibody responses. Future work should therefore pursue
additional case studies investigating both disease-agnostic and
biologically motivated parsimonious correlation structures in a
genuinely multivariate setting, together with the development of
computational approaches better suited to scaling beyond the
bivariate case considered here.

The three principles set out in Section~\ref{sec:framework} serve to
regulate model complexity while anchoring the resulting model structure in biological reasoning. However, these modelling principles could be modified depending on the context. \textbf{P1} may be not be valid in cases where antibody cross-reactivity
between the two antigens is biologically plausible, since this induces correlation between the observed antibody concentrations that is unrelated to the shared exposure and immunological mechanisms represented
by our latent formulation. In this case, such
correlation should ideally be represented as a residual covariance term
directly in the observation model~\eqref{eq:obs_joint}, rather than
absorbed into the joint distribution of $\mv{T}_i$. In practice, however,
separating assay-induced cross-reactivity from genuine biological
correlation in seroreactivity is unlikely to be feasible without further
assumptions or external information, such as replicate measurements or known cross-reactivity profiles for the antigens in question. \textbf{P2}, while also an assumption, reflects a pragmatic
choice; a fully general model in which the spatial field enters both the
component locations in~\eqref{eq:biv_latentT_location_links} and the
mixing probabilities would face a further identifiability problem. The restriction imposed by \textbf{P2} is
additionally supported by the fact that this formulation has performed
satisfactorily in the application presented here and in other (unpublished) analyses, suggesting that confining spatial variation to
exposure probability captures the dominant spatial signal in seroreactivity
data of this kind. \textbf{P3} is stated for antigens sharing a pathogen
and life-cycle stage; for antigens from distinct pathogens the
shared-infection-event pathway no longer applies and the constrained form
of $\delta(a_i)$ could be replaced, for example, by an unconstrained specification. 

Finally, the use of the Laplace approximation makes the proposed model computationally feasible,
but the parametric bootstrap used for uncertainty quantification remains
costly. A substantially cheaper alternative is to estimate parameter
uncertainty from the inverse Hessian of the Laplace-approximated marginal
log-likelihood, as implemented in TMB \citep{kristensen2016tmb}. The validity
of such inference, however, depends on the accuracy of the likelihood
approximation. \citet{ogden2017naive} gives conditions, in terms of errors in
the approximate score and observed information, under which approximate-
likelihood inference retains the first-order properties of exact likelihood
inference. This issue is particularly relevant here because the dimension of
the FEM latent field increases with the discretisation
\citep{ogden2021error}. Moreover, \citet{han2024enhanced} demonstrate that
ordinary Laplace-based standard errors can underestimate sampling variability
in spatial random-effects models. Establishing the corresponding validity
conditions for our model, or using enhanced Laplace variance estimation,
could therefore provide a substantially cheaper alternative to the
parametric bootstrap.

\bibliographystyle{plainnat}
\bibliography{references}

@article{giorgiwallin2026,
  author =        {Giorgi, E. and Wallin, J.},
  journal =       {Biostatistics},
  month =         {01},
  number =        {1},
  pages =         {kxag008},
  title =         {A flexible class of latent variable models for the
                   analysis of antibody response data},
  volume =        {27},
  year =          {2026},
  doi =           {10.1093/biostatistics/kxag008},
  issn =          {1468-4357},
  url =           {https://doi.org/10.1093/biostatistics/kxag008},
}

@article{metcalf2016,
  author =        {Metcalf, C. J. E. and Farrar, J. and Cutts, F. T. and
                   Basta, N. E. and Graham, A. L. and Lessler, J. and
                   Ferguson, N. M. and Burke, D. S. and Grenfell, B. T.},
  journal =       {The Lancet},
  number =        {10045},
  pages =         {728--730},
  title =         {Use of serological surveys to generate key insights
                   into the changing global landscape of infectious
                   disease},
  volume =        {388},
  year =          {2016},
  doi =           {10.1016/S0140-6736(16)30164-7},
}

@article{corran2007,
  author =        {Corran, P. H. and Coleman, P. G. and Riley, E. M. and
                   Drakeley, C. J.},
  journal =       {Trends in Parasitology},
  number =        {12},
  pages =         {575--582},
  title =         {Serology: a robust indicator of malaria transmission
                   intensity?},
  volume =        {23},
  year =          {2007},
  doi =           {10.1016/j.pt.2007.08.023},
}

@article{Drakeley2005,
  author =        {Drakeley, C. J. and Corran, P. H. and Coleman, P. G. and
                   Tongren, J. E. and McDonald, S. L. R. and
                   Carneiro, I. and Malima, R. and Lusingu, J. and
                   Manjurano, A. and Nkya, W. M. M. and Lemnge, M. M. and
                   Cox, J. and Reyburn, H. and Riley, E. M.},
  journal =       {Proceedings of the National Academy of Sciences},
  number =        {14},
  pages =         {5108--5113},
  title =         {Estimating medium- and long-term trends in malaria
                   transmission by using serological markers of malaria
                   exposure},
  volume =        {102},
  year =          {2005},
  doi =           {10.1073/pnas.0408725102},
}

@article{arnold2018,
  author =        {Arnold, B. F. and Scobie, H. M. and Priest, J. W. and
                   Lammie, P. J.},
  journal =       {Emerging Infectious Diseases},
  number =        {7},
  pages =         {1188--1194},
  title =         {Integrated serologic surveillance of population
                   immunity and disease transmission},
  volume =        {24},
  year =          {2018},
  doi =           {10.3201/eid2407.171928},
}

@article{Carcelen2025,
  author =        {Carcelen, A. C. and Monjane, C. and
                   B{\'e}rub{\'e}, S. and Takahashi, S. and
                   Sultane, T. and Chelene, I. and Cooley, G. and
                   Goodhew, E. B. and Patterson, C. and Tetteh, K. and
                   Mutambe, M. and Higdon, M. M. and Mwinnyaa, G. and
                   Nhapure, G. and Duce, P. and Martin, D. L. and
                   Drakeley, C. and Moss, W. J. and Macicame, I.},
  journal =       {Nature Communications},
  pages =         {7946},
  title =         {Multiplex bead assays enable integrated serological
                   surveillance and reveal cross-pathogen
                   vulnerabilities in {Zambezia} Province, {Mozambique}},
  volume =        {16},
  year =          {2025},
  doi =           {10.1038/s41467-025-62305-9},
}

@article{royston2006,
  author =        {Royston, P. and Altman, D. G. and Sauerbrei, W.},
  journal =       {Statistics in Medicine},
  number =        {1},
  pages =         {127--141},
  title =         {Dichotomizing continuous predictors in multiple
                   regression: a bad idea},
  volume =        {25},
  year =          {2006},
  doi =           {10.1002/sim.2331},
}

@article{DiggleGiorgi2016,
  author =        {Diggle, P. J. and Giorgi, E.},
  journal =       {Journal of the American Statistical Association},
  number =        {515},
  pages =         {1096--1120},
  title =         {Model-based geostatistics for prevalence mapping in
                   low-resource settings},
  volume =        {111},
  year =          {2016},
  doi =           {10.1080/01621459.2015.1123158},
}

@article{Stresman2017,
  author =        {Stresman, G. H. and Giorgi, E. and Baidjoe, A. and
                   Knight, P. and Odongo, W. and Owaga, C. and
                   Shagari, S. and Makori, E. and Stevenson, J. and
                   Drakeley, C. and Cox, J. and Bousema, T. and
                   Diggle, P. J.},
  journal =       {Scientific Reports},
  pages =         {45849},
  title =         {Impact of metric and sample size on determining
                   malaria hotspot boundaries},
  volume =        {7},
  year =          {2017},
  doi =           {10.1038/srep45849},
}

@article{Sasanami2023,
  author =        {Sasanami, M. and Amoah, B. and Diori, A. N. and
                   Amza, A. and Souley, A. S. Y. and Bakhtiari, A. and
                   Kadri, B. and Szwarcwald, C. L. and
                   Ferreira Gomez, D. V. and Almou, I. and
                   Lopes, M. F. C. and Masika, M. P. and Beidou, N. and
                   Boyd, S. and Harding-Esch, E. M. and Solomon, A. W. and
                   Giorgi, E.},
  journal =       {PLOS Neglected Tropical Diseases},
  number =        {7},
  pages =         {e0011476},
  title =         {Using model-based geostatistics for assessing the
                   elimination of trachoma},
  volume =        {17},
  year =          {2023},
  doi =           {10.1371/journal.pntd.0011476},
}

@article{efron1979bootstrap,
  title={Bootstrap methods: another look at the jackknife},
  author={Efron, B.},
  journal={The Annals of Statistics},
  volume={7},
  number={1},
  pages={1--26},
  year={1979},
  publisher={Institute of Mathematical Statistics}
}

@article{plackett1965,
  author  = {Plackett, R. L.},
  title   = {A Class of Bivariate Distributions},
  journal = {Journal of the American Statistical Association},
  volume  = {60},
  number  = {310},
  pages   = {516--522},
  year    = {1965},
  doi     = {10.1080/01621459.1965.10480807}
}

@article{CadavidRestrepo2023,
  author =        {Cadavid Restrepo, A. M. and Martin, B. M. and
                   Fuimaono, S. and Clements, A. C. A. and Graves, P. M. and
                   Lau, C. L.},
  journal =       {PLOS Neglected Tropical Diseases},
  number =        {7},
  pages =         {e0010840},
  title =         {Spatial predictive risk mapping of lymphatic
                   filariasis residual hotspots in {A}merican {S}amoa
                   using demographic and environmental factors},
  volume =        {17},
  year =          {2023},
  doi =           {10.1371/journal.pntd.0010840},
}

@article{lubyayi2021,
  author  = {Lubyayi, L. and Mawa, P. A. and Cose, S. and Elliott, A. M. and Levin, J. and Webb, E. L.},
  title   = {Analysis of multivariate longitudinal immuno-epidemiological data using a pairwise joint modelling approach},
  journal = {BMC Immunology},
  volume  = {22},
  pages   = {63},
  year    = {2021},
  doi     = {10.1186/s12865-021-00453-5}
}

@article{hay2024serodynamics,
  author  = {Hay, J. A. and Routledge, I. and Takahashi, S.},
  title   = {Serodynamics: A primer and synthetic review of methods for epidemiological inference using serological data},
  journal = {Epidemics},
  volume  = {49},
  pages   = {100806},
  year    = {2024},
  doi     = {10.1016/j.epidem.2024.100806}
}

@article{ODriscoll2025,
  author =        {O'Driscoll, M. and Hoz{\'e}, N. and Lefrancq, N. and
                   Ribeiro dos Santos, G. and Hoinard, D. and
                   Rahman, M. Z. and Paul, K. K. and Titu, A. M. N. and
                   Alam, M. S. and Hossain, M. E. and Vanhomwegen, J. and
                   Cauchemez, S. and Gurley, E. S. and Salje, H.},
  journal =       {Science Translational Medicine},
  number =        {826},
  pages =         {eads8680},
  title =         {Epidemiological and antigenic inferences from
                   serological cross-reactivity among arboviruses},
  volume =        {17},
  year =          {2025},
  doi =           {10.1126/scitranslmed.ads8680},
}

@article{kleiber2017coherence,
  author =        {Kleiber, W.},
  journal =       {Statistica Sinica},
  number =        {4},
  pages =         {1675--1697},
  title =         {Coherence for multivariate random fields},
  volume =        {27},
  year =          {2017},
  doi =           {10.5705/ss.202015.0309},
}

@article{bolin2020multivariate,
  author =        {Bolin, D. and Wallin, J.},
  journal =       {Journal of the Royal Statistical Society Series B:
                   Statistical Methodology},
  number =        {1},
  pages =         {215--239},
  title =         {Multivariate type G Mat{\'e}rn stochastic partial
                   differential equation random fields},
  volume =        {82},
  year =          {2020},
  doi =           {10.1111/rssb.12351},
}

@article{whittle1963stochastic,
  author =        {Whittle, P.},
  journal =       {Bulletin of the International Statistical Institute},
  number =        {2},
  pages =         {974--994},
  title =         {Stochastic-processes in several dimensions},
  volume =        {40},
  year =          {1963},
}

@article{gneiting2010matern,
  author =        {Gneiting, T. and Kleiber, W. and Schlather, M.},
  journal =       {Journal of the American Statistical Association},
  number =        {491},
  pages =         {1167--1177},
  title =         {Mat{\'e}rn cross-covariance functions for
                   multivariate random fields},
  volume =        {105},
  year =          {2010},
  doi =           {10.1198/jasa.2010.tm09420},
}

@article{lindgren2011explicit,
  author =        {Lindgren, F. and Rue, H. and Lindstr{\"o}m, J.},
  journal =       {Journal of the Royal Statistical Society Series B:
                   Statistical Methodology},
  number =        {4},
  pages =         {423--498},
  title =         {An explicit link between Gaussian fields and Gaussian
                   Markov random fields: the stochastic partial
                   differential equation approach},
  volume =        {73},
  year =          {2011},
  doi =           {10.1111/j.1467-9868.2011.00777.x},
}

@manual{lindgren2024fmesher,
  author =        {Lindgren, F.},
  note =          {R package version 0.8.0},
  title =         {{fmesher}: Triangle Meshes and Related Geometry
                   Tools},
  year =          {2026},
  doi =           {10.32614/CRAN.package.fmesher},
  url =           {https://CRAN.R-project.org/package=fmesher},
}

@manual{bolin2024rspde,
  author =        {Bolin, D. and Simas, A. B.},
  note =          {R package version 2.5.2},
  title =         {{rSPDE}: Rational Approximations of Fractional
                   Stochastic Partial Differential Equations},
  year =          {2026},
  doi =           {10.32614/CRAN.package.rSPDE},
  url =           {https://CRAN.R-project.org/package=rSPDE},
}

@article{tierney1986accurate,
  author =        {Tierney, L. and Kadane, J. B.},
  journal =       {Journal of the American Statistical Association},
  number =        {393},
  pages =         {82--86},
  title =         {Accurate Approximations for Posterior Moments and
                   Marginal Densities},
  volume =        {81},
  year =          {1986},
  doi =           {10.1080/01621459.1986.10478240},
}

@book{rueheld2005,
  address =       {Boca Raton, FL},
  author =        {Rue, H. and Held, L.},
  publisher =     {Chapman \& Hall/CRC},
  series =        {Monographs on Statistics and Applied Probability},
  title =         {Gaussian Markov Random Fields: Theory and
                   Applications},
  volume =        {104},
  year =          {2005},
  doi =           {10.1201/9780203492024},
  isbn =          {978-1-58488-432-3},
}

@article{chen2008cholmod,
  author =        {Chen, Y. and Davis, T. A. and Hager, W. W. and
                   Rajamanickam, S.},
  journal =       {ACM Transactions on Mathematical Software},
  number =        {3},
  pages =         {1--14},
  title =         {Algorithm 887: {CHOLMOD}, Supernodal Sparse
                   {Cholesky} Factorization and Update/Downdate},
  volume =        {35},
  year =          {2008},
  doi =           {10.1145/1391989.1391995},
}

@article{kristensen2016tmb,
  author =        {Kristensen, K. and Nielsen, A. and Berg, C. W. and
                   Skaug, H. and Bell, B. M.},
  journal =       {Journal of Statistical Software},
  number =        {5},
  pages =         {1--21},
  title =         {{TMB}: Automatic Differentiation and {Laplace}
                   Approximation},
  volume =        {70},
  year =          {2016},
  doi =           {10.18637/jss.v070.i05},
}

@article{Tatem2017,
  author =        {Tatem, A. J.},
  journal =       {Scientific Data},
  pages =         {170004},
  title =         {WorldPop, open data for spatial demography},
  volume =        {4},
  year =          {2017},
  doi =           {10.1038/sdata.2017.4},
}

@article{Bousema2013,
  author =        {Bousema, T. and Griffin, J. T. and Sauerwein, R. W. and
                   Smith, D. L. and Churcher, T. S. and Takken, W. and
                   Ghani, A. C. and Drakeley, C. and Gosling, R.},
  journal =       {Trials},
  pages =         {36},
  title =         {The impact of hotspot-targeted interventions on
                   malaria transmission: study protocol for a
                   cluster-randomized controlled trial},
  volume =        {14},
  year =          {2013},
  doi =           {10.1186/1745-6215-14-36},
  url =           {https://doi.org/10.1186/1745-6215-14-36},
}

@article{Bousema2016,
  author =        {Bousema, T. and Stresman, G. and Baidjoe, A. Y. and
                   Stevenson, J. and Omedo, I. and Osoti, V. and
                   Macharia, A. and Ouma, P. and Yaa, P. and Jacobs, E. and
                   Cook, J. and Kleinschmidt, I. and Thomas, M. and
                   Drakeley, C. and Cox, J. and Alaii, J. and Odongo, W. and
                   Laserson, K. and Kariuki, S. and Slutsker, L. and
                   Desai, M. and Barger, B. and Tiono, A. B. and
                   Sauerwein, R. W. and Ogutu, B. and Gosling, R.},
  journal =       {PLoS Medicine},
  number =        {4},
  pages =         {e1001993},
  title =         {The impact of hotspot-targeted interventions on
                   malaria transmission in Rachuonyo South District in
                   the Western Kenyan Highlands: a cluster-randomized
                   controlled trial},
  volume =        {13},
  year =          {2016},
  doi =           {10.1371/journal.pmed.1001993},
  url =           {https://doi.org/10.1371/journal.pmed.1001993},
}

@article{Beeson2016,
  author =        {Beeson, J. G. and Drew, D. R. and Boyle, M. J. and
                   Feng, G. and Fowkes, F. J. I. and Richards, J. S.},
  journal =       {FEMS Microbiology Reviews},
  number =        {3},
  pages =         {343--372},
  title =         {Merozoite surface proteins in red blood cell
                   invasion, immunity and vaccines against malaria},
  volume =        {40},
  year =          {2016},
  doi =           {10.1093/femsre/fuw001},
}

@article{Cowman2017,
  author =        {Cowman, A. F. and Tonkin, C. J. and Tham, W.-H. and
                   Duraisingh, M. T.},
  journal =       {Cell Host \& Microbe},
  number =        {2},
  pages =         {232--245},
  title =         {The molecular basis of erythrocyte invasion by
                   malaria parasites},
  volume =        {22},
  year =          {2017},
  doi =           {10.1016/j.chom.2017.07.003},
}

@article{scheffe1947,
  author =        {Scheff\'e, H.},
  journal =       {The Annals of Mathematical Statistics},
  number =        {3},
  pages =         {434--438},
  title =         {A Useful Convergence Theorem for Probability
                   Distributions},
  volume =        {18},
  year =          {1947},
  doi =           {10.1214/aoms/1177730390},
}

@article{gibbs2002,
  author =        {Gibbs, A. L. and Su, F. E.},
  journal =       {International Statistical Review},
  number =        {3},
  pages =         {419--435},
  title =         {On Choosing and Bounding Probability Metrics},
  volume =        {70},
  year =          {2002},
  doi =           {10.1111/j.1751-5823.2002.tb00178.x},
}

@article{csiszar1967,
  author =        {Csisz\'ar, I.},
  journal =       {Studia Scientiarum Mathematicarum Hungarica},
  pages =         {299--318},
  title =         {Information-Type Measures of Difference of
                   Probability Distributions and Indirect Observations},
  volume =        {2},
  year =          {1967},
}

@article{gneiting2007strictly,
  author =        {Gneiting, T. and Raftery, A. E.},
  journal =       {Journal of the American Statistical Association},
  number =        {477},
  pages =         {359--378},
  title =         {Strictly Proper Scoring Rules, Prediction, and
                   Estimation},
  volume =        {102},
  year =          {2007},
  doi =           {10.1198/016214506000001437},
}

@article{bolinwallin2023,
  author =        {Bolin, D. and Wallin, J.},
  journal =       {Statistical Science},
  number =        {1},
  pages =         {140--159},
  title =         {Local Scale Invariance and Robustness of Proper
                   Scoring Rules},
  volume =        {38},
  year =          {2023},
  doi =           {10.1214/22-STS864},
}

@article{hyndman1996,
  author =        {Hyndman, R. J.},
  journal =       {The American Statistician},
  number =        {2},
  pages =         {120--126},
  title =         {Computing and Graphing Highest Density Regions},
  volume =        {50},
  year =          {1996},
  doi =           {10.1080/00031305.1996.10474359},
}

@book{silverman1986,
  address =       {London},
  author =        {Silverman, B. W.},
  publisher =     {Chapman and Hall},
  title =         {Density Estimation for Statistics and Data Analysis},
  year =          {1986},
}

@article{baudemont2026,
  author =        {Baudemont, G. and Obadia, T. and Garcia, L. and
                   Lambert, C. and Donnadieu, F. and Diene Sarr, F. and
                   Faye, J. and Sokhna, C. and Vigan-Womas, I. and
                   Toure-Balde, A. and Drakeley, C. and Niang, M. and
                   White, M. T.},
  journal =       {PLOS Computational Biology},
  number =        {7},
  pages =         {e1013630},
  title =         {Reconstruction of historical malaria transmission in
                   {S}enegal using multiplex serocatalytic models},
  volume =        {22},
  year =          {2026},
  doi =           {10.1371/journal.pcbi.1013630},
}

@article{ogden2017naive,
  author =        {Ogden, H. E.},
  journal =       {Biometrika},
  number =        {1},
  pages =         {153--164},
  title =         {On Asymptotic Validity of Naive Inference with an
                   Approximate Likelihood},
  volume =        {104},
  year =          {2017},
  doi =           {10.1093/biomet/asx002},
}

@article{ogden2021error,
  author =        {Ogden, Helen},
  journal =       {Stat},
  number =        {1},
  pages =         {e380},
  title =         {On the error in Laplace approximations of
                   high-dimensional integrals},
  volume =        {10},
  year =          {2021},
  doi =           {10.1002/sta4.380},
}

@article{han2024enhanced,
  author =        {Han, J. and Lee, Y.},
  journal =       {Journal of Multivariate Analysis},
  pages =         {105321},
  title =         {Enhanced {Laplace} Approximation},
  volume =        {202},
  year =          {2024},
  doi =           {10.1016/j.jmva.2024.105321},
}

@book{olver2010nist,
  address =       {Cambridge},
  editor =        {Olver, F. W. J. and Lozier, D. W. and Boisvert, R. F. and
                   Clark, C. W.},
  publisher =     {Cambridge University Press},
  title =         {NIST Handbook of Mathematical Functions},
  year =          {2010},
}

\appendix

\section*{Acknowledgements}

We thank all those who contributed to the collection of data included in this paper, specifically the survey participants in Kenya, and the KEMRI/CDC research team.

The computations described in this paper were performed using the University of Birmingham's BlueBEAR HPC service, which provides a High Performance Computing service to the University's research community. See \url{www.birmingham.ac.uk/bear} for more details.

\section{Proofs}
\label{app:proofs}
\begin{proof}[proof of Proposition~\ref{prop:biv-matern-validity}]
We begin by showing that the solution of the coupled system \eqref{eq:bispde}
has the multivariate covariance function $\mv{C}$ as
defined in the proposition.
Set $q_k := (\nu_{k}+1)/2$ and
$\bar\nu := q_1+q_2-1=(\nu_1+\nu_2)/2$. In two spatial
dimensions, the componentwise Whittle operators in
\eqref{eq:univ-spde-whittle} are
$\mathcal{L}_k := (\kappa_k^2-\Delta)^{q_k}$. The
dependence matrix in \eqref{eq:bispde} has inverse
\begin{equation*}
\mv{R} := \mv{D}(\rho_S)^{-1}
=
\begin{pmatrix}
1 & 0\\
\rho_S& \sqrt{1-\rho_S^2}
\end{pmatrix},
\qquad
\mv{R}\mv{R}^{\top}
=
\begin{pmatrix}
1 & \rho_S\\
\rho_S& 1
\end{pmatrix}.
\end{equation*}
The last matrix is positive definite because
$|\rho_S|<1$.

Following \citet[Proposition~1]{bolin2020multivariate}, Fourier transformation maps
$\tau_k\mathcal{L}_k$ to multiplication by
$\tau_k(\kappa_k^2+\|\boldsymbol\omega\|^2)^{q_k}$. Define the
diagonal multiplier matrix
\begin{equation*}
\mv{H}(\boldsymbol\omega)
:=
\operatorname{diag}\!\left\{
\tau_1(\kappa_1^2+\|\boldsymbol\omega\|^2)^{q_1},
\tau_2(\kappa_2^2+\|\boldsymbol\omega\|^2)^{q_2}
\right\}.
\end{equation*}
The two independent white-noise processes have constant spectral
density $(2\pi)^{-2}\mv{I}_2$. Consequently, the spectral density
matrix of the stationary solution is
\begin{equation*}
\mv{f}(\boldsymbol\omega)
=
\frac{1}{(2\pi)^2}
\mv{H}(\boldsymbol\omega)^{-1}
\mv{R}\mv{R}^{\top}
\mv{H}(\boldsymbol\omega)^{-1}.
\end{equation*}
Both $\mv{H}(\boldsymbol\omega)$ and $\mv{R}$ are invertible.
Therefore, for every non-zero $\mv{v}\in\mathbb{R}^2$,
\begin{equation*}
\mv{v}^{\top}\mv{f}(\boldsymbol\omega)\mv{v}
=
\frac{1}{(2\pi)^2}
\left\|\mv{R}^{\top}\mv{H}(\boldsymbol\omega)^{-1}\mv{v}\right\|^2
>0,
\end{equation*}
so $\mv{f}(\boldsymbol\omega)$ is positive definite at every
frequency. Since $(\mv{R}\mv{R}^{\top})_{kk}=1$, its diagonal
entries are
$f_{kk}(\boldsymbol\omega)=
c_k(\kappa_k^2+\|\boldsymbol\omega\|^2)^{-2q_k}$, where
$c_k := \bigl((2\pi)^2\tau_k^2\bigr)^{-1}$. Polar coordinates give
\begin{align*}
\int_{\mathbb{R}^2} f_{kk}(\boldsymbol\omega)\,d\boldsymbol\omega
&=
2\pi c_k\int_0^\infty
s(\kappa_k^2+s^2)^{-2q_k}\,ds =
\frac{\pi c_k\kappa_k^{2-4q_k}}{2q_k-1}
<\infty,
\end{align*}
because $2q_k=\nu_{k}+1>1$. Positive definiteness also gives
$|f_{12}|^2<f_{11}f_{22}$, and hence
 $\mv{f}$ is entrywise integrable. By Cram\'{e}r's theorem,
its inverse Fourier transform is a valid stationary covariance function
\citep[Theorem~1]{kleiber2017coherence}.

It remains to invert the
cross-spectrum
\begin{equation*}
f_{12}(\boldsymbol\omega)
=
\frac{\rho_S}{(\tau_1\tau_2)(2\pi)^2}
\frac{1}
{(\kappa_1^2+\|\boldsymbol\omega\|^2)^{q_1}
(\kappa_2^2+\|\boldsymbol\omega\|^2)^{q_2}}.
\end{equation*}
Set $A := \kappa_1^2+\|\boldsymbol\omega\|^2$ and
$B := \kappa_2^2+\|\boldsymbol\omega\|^2$. The calculation proceeds
in two steps. First, combine the two denominator factors. For
$q_1,q_2>0$, Euler's beta integral
\citep[equation~(5.12.1)]{olver2010nist} is
\begin{equation*}
\int_0^1 t^{q_1-1}(1-t)^{q_2-1}\,dt
=
\frac{\Gamma(q_1)\Gamma(q_2)}{\Gamma(q_1+q_2)}.
\end{equation*}
Since $A,B>0$, the map
$t := uA/\{uA+(1-u)B\}$ is a bijection of $(0,1)$. Using this
change of variable gives
\begin{equation*}
\frac{1}{A^{q_1}B^{q_2}}
=
\frac{\Gamma(q_1+q_2)}{\Gamma(q_1)\Gamma(q_2)}
\int_0^1
\frac{u^{q_1-1}(1-u)^{q_2-1}}
{\{uA+(1-u)B\}^{q_1+q_2}}\,du.
\end{equation*}
Since $uA+(1-u)B=\kappa_u^2+\|\boldsymbol\omega\|^2$, where
$\kappa_u^2 := u\kappa_1^2+(1-u)\kappa_2^2$, and
$q_1+q_2=\bar\nu+1$, substituting this identity into the
cross-spectrum gives
\begin{equation*}
f_{12}(\boldsymbol\omega)
=
\frac{\rho_S}{\tau_1\tau_2}
\frac{\Gamma(\bar\nu+1)}{\Gamma(q_1)\Gamma(q_2)}
\int_0^1
\frac{u^{q_1-1}(1-u)^{q_2-1}}
{(2\pi)^2
 (\kappa_u^2+\|\boldsymbol\omega\|^2)^{\bar\nu+1}}\,du.
\end{equation*}

Second, the inverse Fourier transform of the 
Mat\'{e}rn spectral density is
\begin{equation*}
\frac{1}{(2\pi)^2}
\int_{\mathbb{R}^2}
\frac{\exp(i\boldsymbol\omega^\top\mathbf{h})}
{(\kappa^2+\|\boldsymbol\omega\|^2)^{\bar\nu+1}}\,
d\boldsymbol\omega
=
\frac{1}{4\pi\bar\nu\,\kappa^{2\bar\nu}}
\mathcal{M}(\kappa r;\bar\nu),
\qquad r := \|\mathbf{h}\|.
\end{equation*}
Applying it
with $\kappa=\kappa_u$, and using
$\Gamma(\bar\nu+1)=\bar\nu\Gamma(\bar\nu)$ gives
\begin{equation*}
C_{12}(r)
=
\frac{\rho_S}{\tau_1\tau_2}
\frac{\Gamma(\bar\nu)}
{4\pi\Gamma(q_1)\Gamma(q_2)}
\int_0^1
u^{q_1-1}(1-u)^{q_2-1}
\kappa_u^{-2\bar\nu}
\mathcal{M}(\kappa_u r;\bar\nu)\,du .
\end{equation*}
Substituting the variance normalisation for $\tau_1$ and $\tau_2$
gives \eqref{eq:biv-matern-crosscov}.
\end{proof}

\begin{proof}[proof of corollary~\ref{cor:biv-matern-special-cases}]
If $\kappa_1=\kappa_2=\kappa$, then $\kappa_u=\kappa$ for every
$u\in(0,1)$. The integral in \eqref{eq:biv-matern-crosscov} is
therefore
\begin{equation*}
\kappa^{-2\bar\nu}\mathcal{M}(\kappa r;\bar\nu)
\int_0^1 u^{q_1-1}(1-u)^{q_2-1}\,du
=
\kappa^{-2\bar\nu}\mathcal{M}(\kappa r;\bar\nu)
\frac{\Gamma(q_1)\Gamma(q_2)}
{\Gamma(\bar\nu+1)}.
\end{equation*}
This gives the closed form \eqref{eq:biv-matern-common-range} for the cross-covariance.

Now let $\nu_1=\nu_2=1$, so that $q_1=q_2=1$. For
$\kappa_1\neq\kappa_2$, the denominator of the cross-spectrum in
the proof of Proposition~\ref{prop:biv-matern-validity} satisfies
\begin{equation*}
\frac{1}{(\kappa_1^2+\|\boldsymbol\omega\|^2)
(\kappa_2^2+\|\boldsymbol\omega\|^2)}
=
\frac{1}{\kappa_2^2-\kappa_1^2}
\left\{
\frac{1}{\kappa_1^2+\|\boldsymbol\omega\|^2}
-
\frac{1}{\kappa_2^2+\|\boldsymbol\omega\|^2}
\right\}.
\end{equation*}
The two-dimensional inverse Fourier transform of
$(\kappa^2+\|\boldsymbol\omega\|^2)^{-1}$ is
$(2\pi)^{-1}K_0(\kappa r)$.
Therefore, for $r>0$,
\begin{equation*}
C_{12}(r)
=
\frac{\rho_S}{2\pi\tau_1\tau_2}
\frac{K_0(\kappa_1r)-K_0(\kappa_2r)}
{\kappa_2^2-\kappa_1^2}.
\end{equation*}
Since $\nu_1=\nu_2=1$, the variance normalisations give
$(\tau_1\tau_2)^{-1}=4\pi\sigma_1\sigma_2\kappa_1\kappa_2$,
which proves \eqref{eq:biv-matern-unit-smoothness}.
\end{proof}

\begin{corollary}[Colocated correlation]
\label{cor:biv-matern-correlation}
Under the conditions of
Proposition~\ref{prop:biv-matern-validity}, define
$\rho:=\operatorname{Corr}\{S_1(\mv{x}),S_2(\mv{x})\}$. Then
\begin{equation*}
\rho=\rho_S\mathcal{A},
\qquad
\mathcal{A}
:=
\frac{\displaystyle
\int_{\mathbb{R}^2}
\{f_{11}(\boldsymbol\omega)f_{22}(\boldsymbol\omega)\}^{1/2}\,
d\boldsymbol\omega}{\sigma_1\sigma_2}
\in(0,1].
\end{equation*}
Thus, for fixed marginals, $\rho\in(-\mathcal{A},\mathcal{A})$, with
$\mathcal{A}=1$ if and only if $\kappa_1=\kappa_2$ and $\nu_1=\nu_2$.
\end{corollary}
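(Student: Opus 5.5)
The plan is to work entirely in the spectral domain, using the spectral density matrix $\mv{f}(\boldsymbol\omega)=(2\pi)^{-2}\mv{H}(\boldsymbol\omega)^{-1}\mv{R}\mv{R}^\top\mv{H}(\boldsymbol\omega)^{-1}$ obtained in the proof of Proposition~\ref{prop:biv-matern-validity}.

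\textbf{Step 1: express $\rho$ through the spectrum.} Evaluating the inverse Fourier transform at lag zero gives $\mathrm{Var}\{S_k(\mv{x})\}=\int f_{kk}=\sigma_k^2$, which holds by the variance normalisation through $\tau_k$. It also gives $\mathrm{Cov}\{S_1(\mv{x}),S_2(\mv{x})\}=\int f_{12}$.

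\textbf{Step 2: factorise the cross-spectrum.} Because $\mv{H}$ is diagonal and $(\mv{R}\mv{R}^\top)_{12}=\rho_S$ while $(\mv{R}\mv{R}^\top)_{kk}=1$, we have $f_{12}(\boldsymbol\omega)=\rho_S\{f_{11}(\boldsymbol\omega)f_{22}(\boldsymbol\omega)\}^{1/2}$ exactly. Here both $f_{kk}=(2\pi)^{-2}\tau_k^{-2}(\kappa_k^2+\|\boldsymbol\omega\|^2)^{-2q_k}$ are positive, and $f_{12}$ equals $\rho_S(2\pi)^{-2}(\tau_1\tau_2)^{-1}$ times the product of the square roots of the two denominators. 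Integrating and dividing by $\sigma_1\sigma_2$ then gives $\rho=\rho_S\mathcal{A}$ immediately.

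\textbf{Step 3: show $\mathcal{A}\in(0,1]$.} $\mathcal{A}>0$ because the integrand is strictly positive. For the upper bound, Cauchy--Schwarz in $L^2(\mathbb{R}^2)$ applied to $f_{11}^{1/2}$ and $f_{22}^{1/2}$ gives $\int(f_{11}f_{22})^{1/2}\le(\int f_{11})^{1/2}(\int f_{22})^{1/2}=\sigma_1\sigma_2$.

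\textbf{Step 4: characterise equality.} Equality in Cauchy--Schwarz holds iff $f_{11}^{1/2}$ and $f_{22}^{1/2}$ are proportional almost everywhere. That is, the ratio $(\kappa_2^2+s)^{2q_2}/(\kappa_1^2+s)^{2q_1}$ must be constant in $s=\|\boldsymbol\omega\|^2\ge0$.
\begin{itemize}
\item Comparing growth as $s\to\infty$ forces $q_1=q_2$, i.e.\ $\nu_1=\nu_2$.
\item With equal exponents, constancy of $(\kappa_2^2+s)/(\kappa_1^2+s)$ forces $\kappa_1=\kappa_2$, since the derivative in $s$ is $(\kappa_1^2-\kappa_2^2)/(\kappa_1^2+s)^2$.
\item Conversely, equal parameters give $f_{11}\propto f_{22}$ and hence equality.
\end{itemize}
This equality case is the only step needing care, and it is elementary.

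\textbf{Step 5: attainable range.} For fixed marginal parameters, $\mathcal{A}$ does not depend on $\rho_S$, so as $\rho_S$ ranges over $(-1,1)$, $\rho=\rho_S\mathcal{A}$ ranges over exactly $(-\mathcal{A},\mathcal{A})$.
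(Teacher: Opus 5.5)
Your proposal is correct and follows essentially the same route as the paper. Both arguments factorise $f_{12}=\rho_S(f_{11}f_{22})^{1/2}$ using the diagonal $\mv{H}$ and $(\mv{R}\mv{R}^\top)_{12}=\rho_S$, integrate with $\int f_{kk}=\sigma_k^2$, and apply Cauchy--Schwarz to get $\mathcal{A}\le 1$. Your Step~4 is more explicit than the paper's remark that proportional spectra mean identical spectral shapes, and the only blemish is a wording slip in Step~2: $f_{12}$ is the constant \emph{divided by}, not times, the product of the square roots of the two denominators.
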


\begin{proof}
The spectral-density representation obtained in the proof of
Proposition~\ref{prop:biv-matern-validity} is
\begin{equation*}
\mv{f}(\boldsymbol\omega)
=
\frac{1}{(2\pi)^2}
\mv{H}(\boldsymbol\omega)^{-1}
\mv{R}\mv{R}^{\top}
\mv{H}(\boldsymbol\omega)^{-1},
\end{equation*}
where $\mv{H}(\boldsymbol\omega)$ is diagonal and positive and
$(\mv{R}\mv{R}^{\top})_{12}=\rho_S$. Hence
$f_{12}=\rho_S(f_{11}f_{22})^{1/2}$, so $\rho_S$ is the constant signed
spectral correlation. Integrating this identity and using
$\int f_{kk}(\boldsymbol\omega)\,d\boldsymbol\omega=\sigma_k^2$ gives
$\rho=\rho_S\mathcal{A}$. The Cauchy--Schwarz inequality gives
$\mathcal{A}\leq1$, with equality if and only if the marginal spectra are
proportional. After variance normalisation this is equivalent to identical
spectral shapes, completing the proof.
\end{proof}

\section{Additional simulation results}
\label{app:sim_results}

\begin{table}[H]
\centering
\footnotesize
\caption{Median bias (MB) and median absolute error (MAE) of $S$, $T$ and $Y$ (pooled across both antigens), comparing the separate univariate models (Univ.) and the joint bivariate model (Joint), across the 9 $(\rho_S,\rho_T)$ scenarios and the 4 variance-ratio values $c$, sample size $n_{L} = 600$.}
\label{tab:bias_mae_redhot_n600_wide}
\resizebox{\textwidth}{!}{%
\begin{tabular}{lllrrrrrrrrrrrr}
\toprule
$c$ & $\rho_S$ & $\rho_T$ & \multicolumn{4}{c}{$S$} & \multicolumn{4}{c}{$T$} & \multicolumn{4}{c}{$Y$} \\
\cmidrule(lr){4-7} \cmidrule(lr){8-11} \cmidrule(lr){12-15}
 & & & \multicolumn{2}{c}{MB} & \multicolumn{2}{c}{MAE} & \multicolumn{2}{c}{MB} & \multicolumn{2}{c}{MAE} & \multicolumn{2}{c}{MB} & \multicolumn{2}{c}{MAE} \\
\cmidrule(lr){4-5} \cmidrule(lr){6-7} \cmidrule(lr){8-9} \cmidrule(lr){10-11} \cmidrule(lr){12-13} \cmidrule(lr){14-15}
 & & & Univ. & Joint & Univ. & Joint & Univ. & Joint & Univ. & Joint & Univ. & Joint & Univ. & Joint \\
\midrule
\multirow{9}{*}{1} & 0.1 & 0.1 & -0.006 & -0.015 & 0.718 & 0.704 & -0.085 & 0.004 & 0.206 & 0.161 & 0.016 & 0.006 & 1.111 & 1.104 \\
 & 0.1 & 0.5 & -0.004 & -0.008 & 0.722 & 0.702 & -0.085 & 0.004 & 0.206 & 0.160 & 0.013 & 0.005 & 1.103 & 1.099 \\
 & 0.1 & 0.9 & -0.003 & -0.009 & 0.713 & 0.678 & -0.088 & -0.005 & 0.207 & 0.157 & 0.013 & -0.002 & 1.096 & 1.091 \\
 & 0.5 & 0.1 & -0.016 & -0.031 & 0.709 & 0.677 & -0.085 & 0.004 & 0.205 & 0.160 & 0.018 & 0.004 & 1.107 & 1.099 \\
 & 0.5 & 0.5 & -0.025 & -0.028 & 0.706 & 0.682 & -0.084 & 0.003 & 0.206 & 0.159 & 0.011 & 0.005 & 1.100 & 1.094 \\
 & 0.5 & 0.9 & -0.010 & -0.025 & 0.704 & 0.669 & -0.086 & -0.005 & 0.205 & 0.157 & 0.013 & -0.002 & 1.092 & 1.088 \\
 & 0.9 & 0.1 & -0.016 & -0.040 & 0.692 & 0.627 & -0.085 & 0.004 & 0.205 & 0.159 & 0.017 & 0.006 & 1.104 & 1.093 \\
 & 0.9 & 0.5 & 0.000 & -0.013 & 0.690 & 0.639 & -0.084 & 0.005 & 0.204 & 0.158 & 0.016 & 0.005 & 1.098 & 1.088 \\
 & 0.9 & 0.9 & -0.020 & -0.036 & 0.690 & 0.633 & -0.088 & -0.005 & 0.205 & 0.156 & 0.015 & 0.001 & 1.089 & 1.082 \\
\midrule
\multirow{9}{*}{0.5} & 0.1 & 0.1 & -0.010 & -0.009 & 0.716 & 0.700 & -0.086 & 0.004 & 0.203 & 0.161 & 0.010 & 0.008 & 1.090 & 1.085 \\
 & 0.1 & 0.5 & -0.012 & -0.017 & 0.713 & 0.698 & -0.085 & 0.007 & 0.203 & 0.159 & 0.011 & 0.005 & 1.084 & 1.078 \\
 & 0.1 & 0.9 & -0.010 & -0.009 & 0.712 & 0.670 & -0.090 & -0.005 & 0.205 & 0.157 & 0.011 & -0.001 & 1.075 & 1.070 \\
 & 0.5 & 0.1 & -0.000 & -0.022 & 0.702 & 0.668 & -0.084 & 0.004 & 0.202 & 0.160 & 0.014 & 0.004 & 1.085 & 1.079 \\
 & 0.5 & 0.5 & -0.009 & -0.016 & 0.701 & 0.675 & -0.087 & 0.004 & 0.203 & 0.158 & 0.014 & 0.003 & 1.080 & 1.073 \\
 & 0.5 & 0.9 & -0.014 & -0.019 & 0.698 & 0.664 & -0.091 & -0.004 & 0.204 & 0.156 & 0.012 & -0.000 & 1.073 & 1.062 \\
 & 0.9 & 0.1 & -0.020 & -0.040 & 0.686 & 0.624 & -0.087 & 0.002 & 0.203 & 0.158 & 0.014 & 0.005 & 1.081 & 1.071 \\
 & 0.9 & 0.5 & -0.012 & -0.031 & 0.688 & 0.634 & -0.087 & 0.005 & 0.202 & 0.158 & 0.011 & 0.006 & 1.072 & 1.067 \\
 & 0.9 & 0.9 & -0.024 & -0.040 & 0.684 & 0.625 & -0.085 & -0.005 & 0.201 & 0.155 & 0.012 & -0.001 & 1.068 & 1.060 \\
\midrule
\multirow{9}{*}{0.25} & 0.1 & 0.1 & -0.017 & -0.017 & 0.708 & 0.697 & -0.088 & 0.001 & 0.202 & 0.161 & 0.010 & 0.005 & 1.081 & 1.074 \\
 & 0.1 & 0.5 & -0.006 & -0.003 & 0.715 & 0.696 & -0.086 & 0.002 & 0.201 & 0.160 & 0.011 & 0.006 & 1.073 & 1.069 \\
 & 0.1 & 0.9 & -0.003 & -0.007 & 0.705 & 0.668 & -0.089 & -0.004 & 0.201 & 0.157 & 0.008 & 0.000 & 1.063 & 1.055 \\
 & 0.5 & 0.1 & -0.019 & -0.019 & 0.697 & 0.670 & -0.087 & 0.002 & 0.201 & 0.160 & 0.010 & 0.006 & 1.076 & 1.069 \\
 & 0.5 & 0.5 & -0.011 & -0.018 & 0.697 & 0.673 & -0.089 & 0.001 & 0.201 & 0.159 & 0.012 & 0.007 & 1.069 & 1.063 \\
 & 0.5 & 0.9 & -0.017 & -0.021 & 0.691 & 0.652 & -0.089 & -0.004 & 0.200 & 0.156 & 0.013 & 0.004 & 1.058 & 1.053 \\
 & 0.9 & 0.1 & -0.003 & -0.023 & 0.680 & 0.622 & -0.090 & 0.001 & 0.201 & 0.159 & 0.015 & 0.006 & 1.073 & 1.061 \\
 & 0.9 & 0.5 & -0.027 & -0.039 & 0.685 & 0.632 & -0.086 & 0.001 & 0.200 & 0.158 & 0.011 & 0.006 & 1.067 & 1.059 \\
 & 0.9 & 0.9 & -0.019 & -0.024 & 0.678 & 0.619 & -0.089 & -0.004 & 0.201 & 0.156 & 0.010 & 0.001 & 1.055 & 1.050 \\
\midrule
\multirow{9}{*}{0.1} & 0.1 & 0.1 & -0.008 & -0.010 & 0.710 & 0.694 & -0.087 & -0.001 & 0.199 & 0.161 & 0.009 & 0.003 & 1.072 & 1.065 \\
 & 0.1 & 0.5 & -0.014 & -0.014 & 0.710 & 0.692 & -0.091 & -0.001 & 0.201 & 0.160 & 0.010 & 0.003 & 1.064 & 1.062 \\
 & 0.1 & 0.9 & -0.006 & -0.003 & 0.701 & 0.660 & -0.092 & -0.005 & 0.201 & 0.157 & 0.009 & 0.000 & 1.056 & 1.049 \\
 & 0.5 & 0.1 & -0.011 & -0.021 & 0.695 & 0.669 & -0.089 & -0.003 & 0.200 & 0.161 & 0.011 & 0.005 & 1.066 & 1.060 \\
 & 0.5 & 0.5 & -0.010 & -0.014 & 0.695 & 0.673 & -0.086 & -0.001 & 0.198 & 0.159 & 0.013 & 0.005 & 1.062 & 1.057 \\
 & 0.5 & 0.9 & -0.009 & -0.016 & 0.691 & 0.654 & -0.090 & -0.004 & 0.200 & 0.156 & 0.010 & 0.002 & 1.052 & 1.046 \\
 & 0.9 & 0.1 & -0.006 & -0.026 & 0.680 & 0.624 & -0.089 & 0.001 & 0.199 & 0.159 & 0.014 & 0.005 & 1.064 & 1.055 \\
 & 0.9 & 0.5 & -0.008 & -0.018 & 0.679 & 0.630 & -0.088 & 0.002 & 0.198 & 0.158 & 0.012 & 0.007 & 1.058 & 1.052 \\
 & 0.9 & 0.9 & -0.008 & -0.018 & 0.677 & 0.614 & -0.088 & -0.003 & 0.198 & 0.155 & 0.008 & 0.002 & 1.050 & 1.040 \\
\bottomrule
\end{tabular}
}
\end{table}

\begin{table}[H]
\centering
\footnotesize
\caption{Median bias (MB) and median absolute error (MAE) of $S$, $T$ and $Y$ (pooled across both antigens), comparing the separate univariate models (Univ.) and the joint bivariate model (Joint), across the 9 $(\rho_S,\rho_T)$ scenarios and the 4 variance-ratio values $c$, sample size $n_{L} = 1200$.}
\label{tab:bias_mae_redhot_n1200_wide}
\resizebox{\textwidth}{!}{%
\begin{tabular}{lllrrrrrrrrrrrr}
\toprule
$c$ & $\rho_S$ & $\rho_T$ & \multicolumn{4}{c}{$S$} & \multicolumn{4}{c}{$T$} & \multicolumn{4}{c}{$Y$} \\
\cmidrule(lr){4-7} \cmidrule(lr){8-11} \cmidrule(lr){12-15}
 & & & \multicolumn{2}{c}{MB} & \multicolumn{2}{c}{MAE} & \multicolumn{2}{c}{MB} & \multicolumn{2}{c}{MAE} & \multicolumn{2}{c}{MB} & \multicolumn{2}{c}{MAE} \\
\cmidrule(lr){4-5} \cmidrule(lr){6-7} \cmidrule(lr){8-9} \cmidrule(lr){10-11} \cmidrule(lr){12-13} \cmidrule(lr){14-15}
 & & & Univ. & Joint & Univ. & Joint & Univ. & Joint & Univ. & Joint & Univ. & Joint & Univ. & Joint \\
\midrule
\multirow{9}{*}{1} & 0.1 & 0.1 & -0.009 & -0.007 & 0.666 & 0.655 & -0.084 & 0.004 & 0.201 & 0.161 & 0.013 & 0.004 & 1.110 & 1.106 \\
 & 0.1 & 0.5 & -0.020 & -0.017 & 0.665 & 0.650 & -0.084 & 0.002 & 0.200 & 0.159 & 0.010 & 0.004 & 1.103 & 1.100 \\
 & 0.1 & 0.9 & -0.014 & -0.013 & 0.660 & 0.623 & -0.090 & -0.006 & 0.202 & 0.157 & 0.010 & -0.000 & 1.096 & 1.089 \\
 & 0.5 & 0.1 & -0.014 & -0.022 & 0.649 & 0.630 & -0.084 & 0.004 & 0.201 & 0.159 & 0.014 & 0.004 & 1.106 & 1.100 \\
 & 0.5 & 0.5 & -0.025 & -0.030 & 0.650 & 0.634 & -0.085 & 0.002 & 0.201 & 0.158 & 0.012 & 0.003 & 1.101 & 1.095 \\
 & 0.5 & 0.9 & -0.024 & -0.024 & 0.645 & 0.614 & -0.094 & -0.006 & 0.203 & 0.156 & 0.012 & -0.002 & 1.092 & 1.085 \\
 & 0.9 & 0.1 & -0.015 & -0.030 & 0.630 & 0.576 & -0.087 & 0.003 & 0.201 & 0.159 & 0.015 & 0.001 & 1.101 & 1.093 \\
 & 0.9 & 0.5 & -0.020 & -0.030 & 0.630 & 0.582 & -0.085 & 0.002 & 0.200 & 0.157 & 0.012 & 0.001 & 1.096 & 1.090 \\
 & 0.9 & 0.9 & -0.033 & -0.048 & 0.629 & 0.579 & -0.091 & -0.007 & 0.202 & 0.156 & 0.014 & -0.001 & 1.089 & 1.082 \\
\midrule
\multirow{9}{*}{0.5} & 0.1 & 0.1 & -0.021 & -0.016 & 0.661 & 0.650 & -0.085 & -0.001 & 0.198 & 0.160 & 0.009 & 0.005 & 1.088 & 1.085 \\
 & 0.1 & 0.5 & -0.023 & -0.020 & 0.660 & 0.647 & -0.086 & 0.000 & 0.198 & 0.159 & 0.010 & 0.004 & 1.082 & 1.081 \\
 & 0.1 & 0.9 & -0.020 & -0.020 & 0.657 & 0.614 & -0.094 & -0.007 & 0.201 & 0.157 & 0.010 & -0.003 & 1.076 & 1.068 \\
 & 0.5 & 0.1 & -0.016 & -0.020 & 0.642 & 0.626 & -0.083 & 0.000 & 0.197 & 0.159 & 0.012 & 0.004 & 1.086 & 1.079 \\
 & 0.5 & 0.5 & -0.020 & -0.027 & 0.642 & 0.629 & -0.087 & -0.000 & 0.197 & 0.158 & 0.010 & 0.004 & 1.080 & 1.074 \\
 & 0.5 & 0.9 & -0.022 & -0.021 & 0.640 & 0.608 & -0.092 & -0.007 & 0.199 & 0.157 & 0.011 & 0.000 & 1.072 & 1.066 \\
 & 0.9 & 0.1 & -0.017 & -0.032 & 0.621 & 0.569 & -0.084 & -0.000 & 0.196 & 0.158 & 0.013 & 0.005 & 1.079 & 1.074 \\
 & 0.9 & 0.5 & -0.032 & -0.044 & 0.625 & 0.582 & -0.084 & 0.000 & 0.196 & 0.157 & 0.010 & 0.002 & 1.074 & 1.071 \\
 & 0.9 & 0.9 & -0.026 & -0.033 & 0.622 & 0.570 & -0.093 & -0.007 & 0.200 & 0.156 & 0.009 & 0.002 & 1.065 & 1.060 \\
\midrule
\multirow{9}{*}{0.25} & 0.1 & 0.1 & -0.020 & -0.019 & 0.657 & 0.648 & -0.089 & -0.005 & 0.198 & 0.161 & 0.008 & 0.004 & 1.079 & 1.075 \\
 & 0.1 & 0.5 & -0.014 & -0.011 & 0.660 & 0.642 & -0.090 & -0.004 & 0.197 & 0.160 & 0.009 & 0.004 & 1.073 & 1.068 \\
 & 0.1 & 0.9 & -0.018 & -0.014 & 0.654 & 0.610 & -0.097 & -0.008 & 0.200 & 0.158 & 0.010 & -0.000 & 1.064 & 1.058 \\
 & 0.5 & 0.1 & -0.024 & -0.031 & 0.639 & 0.622 & -0.090 & -0.005 & 0.198 & 0.161 & 0.011 & 0.004 & 1.074 & 1.071 \\
 & 0.5 & 0.5 & -0.022 & -0.029 & 0.641 & 0.626 & -0.091 & -0.003 & 0.197 & 0.159 & 0.011 & 0.004 & 1.068 & 1.064 \\
 & 0.5 & 0.9 & -0.025 & -0.024 & 0.638 & 0.604 & -0.096 & -0.007 & 0.199 & 0.157 & 0.009 & 0.000 & 1.060 & 1.054 \\
 & 0.9 & 0.1 & -0.027 & -0.041 & 0.621 & 0.568 & -0.091 & -0.006 & 0.197 & 0.160 & 0.010 & 0.003 & 1.070 & 1.064 \\
 & 0.9 & 0.5 & -0.024 & -0.039 & 0.621 & 0.577 & -0.092 & -0.003 & 0.197 & 0.158 & 0.012 & 0.004 & 1.063 & 1.058 \\
 & 0.9 & 0.9 & -0.034 & -0.038 & 0.620 & 0.566 & -0.095 & -0.007 & 0.199 & 0.156 & 0.011 & -0.001 & 1.055 & 1.048 \\
\midrule
\multirow{9}{*}{0.1} & 0.1 & 0.1 & -0.017 & -0.017 & 0.656 & 0.648 & -0.091 & -0.009 & 0.197 & 0.162 & 0.008 & 0.003 & 1.072 & 1.068 \\
 & 0.1 & 0.5 & -0.015 & -0.012 & 0.657 & 0.639 & -0.094 & -0.007 & 0.198 & 0.161 & 0.008 & 0.004 & 1.066 & 1.065 \\
 & 0.1 & 0.9 & -0.019 & -0.011 & 0.655 & 0.610 & -0.096 & -0.007 & 0.198 & 0.158 & 0.009 & 0.000 & 1.058 & 1.050 \\
 & 0.5 & 0.1 & -0.022 & -0.028 & 0.638 & 0.622 & -0.093 & -0.010 & 0.197 & 0.161 & 0.010 & 0.002 & 1.067 & 1.063 \\
 & 0.5 & 0.5 & -0.023 & -0.030 & 0.637 & 0.625 & -0.095 & -0.007 & 0.197 & 0.159 & 0.010 & 0.003 & 1.061 & 1.058 \\
 & 0.5 & 0.9 & -0.027 & -0.030 & 0.636 & 0.602 & -0.096 & -0.007 & 0.197 & 0.157 & 0.008 & 0.001 & 1.054 & 1.047 \\
 & 0.9 & 0.1 & -0.021 & -0.035 & 0.618 & 0.569 & -0.095 & -0.010 & 0.197 & 0.161 & 0.012 & 0.004 & 1.064 & 1.057 \\
 & 0.9 & 0.5 & -0.031 & -0.043 & 0.619 & 0.577 & -0.092 & -0.008 & 0.195 & 0.159 & 0.011 & 0.002 & 1.056 & 1.052 \\
 & 0.9 & 0.9 & -0.033 & -0.045 & 0.616 & 0.564 & -0.097 & -0.007 & 0.197 & 0.157 & 0.009 & -0.001 & 1.048 & 1.043 \\
\bottomrule
\end{tabular}
}
\end{table}

\end{document}